\newtheorem{theorem}{\noindent{\it Theorem}}[section]
\newtheorem{lemma}[theorem]{\noindent{\it Lemma}}
\newtheorem{remark}[theorem]{\noindent{\it Remark}}
\newenvironment{proof}{\noindent{\it Proof:}}{$\hfill$ $\Box$\\ }
\begin{document}

\title{Asymmetric Quantum Codes: New Codes from Old}

\author{Giuliano G. La Guardia
\thanks{Giuliano G. La Guardia is with Department of Mathematics and Statistics,
State University of Ponta Grossa, 84030-900, Ponta Grossa - PR, Brazil.
E-mail:~{\tt \small gguardia@uepg.br}.
}}

\maketitle

\begin{abstract}
In this paper we extend to asymmetric quantum error-correcting
codes (AQECC) the construction methods, namely: puncturing,
extending, expanding, direct sum and the $({ \bf u}| { \bf u}+{
\bf v})$ construction. By applying these methods, several families
of asymmetric quantum codes can be constructed. Consequently, as
an example of application of quantum code expansion developed
here, new families of asymmetric quantum codes derived from
generalized Reed-Muller (GRM) codes, quadratic residue (QR),
Bose-Chaudhuri-Hocquenghem (BCH), character codes and
affine-invariant codes are constructed.
\end{abstract}

\section{Introduction}

To make reliable the transmission or storage of quantum
information against noise caused by the environment there exist
many works available in the literature dealing with constructions
of efficient quantum error-correcting codes (QECC) over unbiased
quantum channels
\cite{Calderbank:1998,Rains:1999,Steane:1999,Ashikhmin:2000,Ashikhmin:2001,Grassl:2003,Ketkar:2006,LaGuardia:2009,LaGuardia:2011A}.
Recently, these constructions have been extended to asymmetric
quantum channels in a natural way
\cite{Ioffe:2007,Evans:2007,Stephens:2008,Salah:2008,Sarvepalli:2008,Sarvepalli:2009,Salah:2010,Ezerman:2010,Wang:2010,Ezerman:2010A,Ezerman:2010B,LaGuardia:2011,LaGuardia:2012,LaGuardia:2012I}.

Asymmetric quantum error-correcting codes (AQECC) are quantum
codes defined over quantum channels where qudit-flip errors and
phase-shift errors may have different probabilities. Steane
\cite{Steane:1996} was the first author who introduced the notion
of asymmetric quantum errors. As usual, the parameters ${[[n, k, \
d_{z}/d_{x}]]}_{q}$ denote an asymmetric quantum code, where
$d_{z}$ is the minimum distance corresponding to phase-shift
errors and $d_{x}$ is the minimum distance corresponding to
qudit-flip errors. The combined amplitude damping and dephasing
channel (specific to binary systems; see \cite{Sarvepalli:2008})
is an example for a quantum channel that satisfies $d_{z} >
d_{x}$, i. e., the probability of occurrence of phase-shift errors
is greater than the probability of occurrence of qudit-flip
errors.

Let us give a brief summary of the papers available in the
literature dealing with AQECC. In \cite{Evans:2007}, the authors
explored the asymmetry between qubit-flip and phase-shift errors
to perform an optimization when compared to QECC. In
\cite{Ioffe:2007} the authors utilize BCH codes to correct
qubit-flip errors and LDPC codes to correct more frequently
phase-shift errors. In \cite{Stephens:2008} the authors consider
the investigation of AQECC via code conversion. In the papers
\cite{Salah:2008,LaGuardia:2011}, families of AQECC derived from
BCH codes were constructed. Asymmetric stabilizer codes derived
from LDPC codes were constructed in \cite{Sarvepalli:2008}, and in
\cite{Sarvepalli:2009}, the same authors have constructed several
families of both binary and nonbinary AQECC as well as to derive
bounds such as the (quantum) Singleton and the linear programming
bound to AQECC. In \cite{Salah:2010}, both AQECC (derived from
cyclic codes) and subsystem codes were investigated. In
\cite{Wang:2010}, the construction of nonadditive AQECC as well as
constructions of asymptotically good AQECC derived from
algebraic-geometry codes were presented. In \cite{Ezerman:2010},
the Calderbank-Shor-Steane (CSS) construction
\cite{Nielsen:2000,Calderbank:1998,Ketkar:2006} was extended to
include codes endowed with the Hermitian and also trace Hermitian
inner product. In \cite{Ezerman:2010A}, asymmetric quantum MDS
codes derived from generalized Reed-Solomon (GRS) codes were
constructed. More recently, in
\cite{LaGuardia:2012,LaGuardia:2012I}, constructions of families
of AQECC by expanding GRS codes and by applying product codes,
respectively, were presented.

In this paper we extend to asymmetric quantum error-correcting
codes (AQECC) the construction methods, namely: puncturing,
extending, expanding, direct sum and the $({ \bf u}| { \bf u}+{
\bf v})$ construction. An interesting fact pointed out by the
referee is that the results presented in the first version of this
paper (constructions of asymmetric quantum codes derived from
classical linear codes endowed with the Euclidean as well as with
the Hermitian inner product) also hold in a more general setting,
i. e., constructions of asymmetric quantum codes derived from
additive codes (see \cite{Calderbank:1998,Ketkar:2006}, where in
\cite{Calderbank:1998} a general theory of quantum codes over
$GF(4)$ was developed, and in \cite{Ketkar:2006} a generalization
to nonbinary alphabets was presented). Because of this fact, we
keep the original constructions of AQECC derived from linear codes
and we also add more results with respect to constructions of
AQECC derived from additive codes. More specifically, concerning
the techniques of extending and the $({ \bf u}| { \bf u}+{ \bf
v})$ construction, the arguments shown in this paper to AQECC
codes derived from linear codes are similar to the ones derived
from additive codes. The techniques of puncturing, expanding and
direct sum will be shown in two different ways (each of them), i.
e., AQECC derived from linear and additive codes. We keep both
styles of constructions (additive/linear) in this paper because
although the first (additive) is more general, we utilize
different tools to show the results for the linear case, and these
tools can be applied in future works.

The paper is organized as follows. In Section~\ref{sec2} we fix
the notation. In Section~\ref{sec3} we recall the concepts and
definitions of AQECC and error operators. Section~\ref{sec4} is
devoted to establish the construction methods. More precisely, we
show how to construct new AQECC by means of the techniques of
puncturing, extending, expanding, direct sum and the $({ \bf u}| {
\bf u}+{ \bf v})$ construction. In Section~\ref{sec5}, we utilize
the quantum code expansion developed in Section~\ref{sec4} applied
to (classical) generalized Reed-Muller (GRM) codes, quadratic
residue, character codes, BCH and affine-invariant codes in order
to construct several new families of AQECC. Finally, in
Section~\ref{sec6}, we discuss the contributions presented in this
paper.

\section{Notation}\label{sec2}

Throughout this paper, $p$ denotes a prime number, $q$ denotes a
prime power, ${\mathbb F}_{q}$ is a finite field with $q$
elements, $\alpha \in {\mathbb F}_{q^m}$ is a primitive $n$ root
of unity. The (Hamming) distance of two vectors ${ \bf v}, { \bf w
}\in {\mathbb F}_{q}^{n}$ is the number of coordinates in which ${
\bf v}$ and ${ \bf w }$ differ. The (Hamming) weight of a vector
${ \bf v}=(v_1, v_2, \ldots, v_{n}) \in {\mathbb F}_{q}^{n}$ is
the number of nonzero coordinates of ${ \bf v}$. The trace map
tr$_{q^{m}/q}: {\mathbb F}_{q^{m}} \longrightarrow {\mathbb
F}_{q}$ is defined as tr$_{q^{m}/q}(a):= \displaystyle
\sum_{i=0}^{m-1} a^{q^{i}}$. We denote $H\leq G$ to mean that $H$
is a subgroup of a group $G$; the center of $G$ is denoted by
$Z(G)$. If $S\leq G$ then we denote by $C_{G}(S)$ the centralizer
of $S$ in $G$; $SZ(G)$ denotes the subgroup generated by $S$ and
the center $Z(G)$.

As usual, ${[n, k, d]}_{q}$ denotes the parameters of a classical
linear code $C$ over ${\mathbb F}_{q}$, of length $n$, dimension
$k$ and minimum distance $d$. We denote by wt$(C)$ the minimum
weight of $C$, and by d$(C)$ the minimum distance of $C$.
Sometimes we have abused the notation by writing $C = {[n, k,
d]}_{q}$. If $C$ is an ${[n, k, d]}_{q}$ code then its Euclidean
dual is defined as ${C}^{{\perp}} = \{ {\bf y} \in \displaystyle
{\mathbb F}_{q}^{n} \mid {\bf y} \cdot {\bf x} = 0, \forall \ {\bf
x} \in C\}$; in the case that $C$ is an ${[n, k, d]}_{q^{2}}$
code, then its Hermitian dual is defined by ${C}^{{\perp}_{h}} =
\{ {\bf y} \in \displaystyle {\mathbb F}_{q^{2}}^{n} \mid {\bf
y}^q \cdot {\bf x} = 0, \forall \ {\bf x} \in C\}$, where ${\bf
y}^q = (\displaystyle y_{1}^{q}, \ldots, y_{n}^{q} )$ denotes the
conjugate of the vector ${\bf y} = (y_1, \ldots, y_n )$. If ${ \bf
a}=(a_1, \ldots , a_{n})$ and ${ \bf b}=(b_1, \ldots , b_{n})$ are
two vectors in ${\mathbb F}_{q}^{n}$ then the symplectic weight
swt of the vector $({ \bf a}|{ \bf b}) \in {\mathbb F}_{q}^{2n}$
is defined by swt$(({ \bf a}|{ \bf b})) = \ \# \{i: 1\leq i\leq n
| (a_i , b_i)\neq (0, 0)\}$. The trace-symplectic form of two
vectors $({ \bf a}|{ \bf b}), ({ \bf {a}^{*}}|{ \bf {b}^{*}}) \in
{\mathbb F}_{q}^{2n}$ is defined by ${\langle ({ \bf a}|{ \bf b})|
({ \bf {a}^{*}}|{ \bf {b}^{*}})\rangle}_{s}=$ tr$_{q/p}({ \bf
b}\cdot { \bf {a}^{*}}-{ \bf {b}^{*}}\cdot{ \bf a})$. If $C\leq
{\mathbb F}_{q}^{2n}$ is an additive code then swt$(C)$ denotes
the symplectic weight of $C$ and $C^{{\perp}_{s}}$ denotes the
trace-symplectic dual of $C$. Similarly, if $C \leq {\mathbb
F}_{q^{2}}^{n}$ is an additive code then $C^{{\perp}_{a}}$ denotes
the trace-alternating dual of $C$, where the trace-alternating
form of two vectors $ { \bf v}, { \bf w} \in {\mathbb
F}_{q^{2}}^{n}$ is defined as ${\langle { \bf v} | { \bf w}
\rangle}_{a}=$ tr$_{q/p}\left( \frac{{ \bf v}\cdot{ \bf w}^{q}-{
\bf v}^{q}\cdot{ \bf w}}{{\beta}^{2q}-{\beta}^{2}}\right)$, where
$( \beta, {\beta}^{q})$ is a normal basis of ${\mathbb F}_{q}^{2}$
over ${\mathbb F}_{q}$.

\section{Error Groups and Asymmetric Codes}\label{sec3}

In this section we recall some basic concepts on quantum error
operators \cite{Calderbank:1998,Ketkar:2006,Sarvepalli:2009} and
asymmetric quantum codes.

Let $\cal{H}$ be the Hilbert space ${\cal H} = {\mathbb C}^{q^n} =
{\mathbb{C}}^{q} \otimes \ldots \otimes {\mathbb{C}}^{q}$. Let
$\mid$$x \rangle$ be the vectors of an orthonormal basis of
${\mathbb{C}}^{q}$, where the labels $x$ are elements of ${\mathbb
F}_{q}$. Consider $a, b \in {\mathbb F}_{q}$; the unitary
operators $X(a)$ and $Z(b)$ on ${\mathbb{C}}^{q}$ are defined by
$X(a)$$\mid$$x \rangle =$$\mid$$x + a\rangle$ and $Z(b)$$\mid$$x
\rangle = w^{tr_{q/p}(bx)}$$\mid$$x\rangle$, respectively, where
$w=\exp (2\pi i/ p)$ is a $p$th root of unity.

Consider that ${\bf a}= (a_1, \ldots , a_n) \in {\mathbb
F}_{q}^{n}$ and ${\bf b}= (b_1, \ldots , b_n) \in {\mathbb
F}_{q}^{n}$. Denote by $X({\bf a})= X(a_1)\otimes \ldots \otimes
X(a_n)$ and $Z({\bf b})= Z(b_1)\otimes \ldots \otimes Z(b_n)$ the
tensor products of $n$ error operators. The set ${\bf E}_{n} = \{
X({\bf a})Z({\bf b}) \mid {\bf a}, {\bf b} \in {\mathbb F}_{q}^{n}
\}$ is an \emph{error basis} on the complex vector space ${\mathbb
C}^{q^n}$ and the set $ {\bf G}_{n} = \{w^c X({\bf a})Z({\bf b})
\mid {\bf a}, {\bf b} \in {\mathbb F}_{q}^{n} , c\in {\mathbb F}_p
\}$ is the \emph{error group} associated with ${\bf E}_{n}$. For a
quantum error $e = w^c  X({\bf a})Z({\bf b}) \in {\bf G}_{n}$ the
$X$-weight is given by wt$_{X}(e) = \ \# \{i: 1\leq i\leq n | a_i
\neq 0\}$; the $Z$-weight is defined as wt$_{Z}(e) = \ \# \{i:
1\leq i\leq n | b_i \neq 0\}$ and the symplectic (or quantum)
weight swt$(e) = \ \# \{i: 1\leq i\leq n | (a_i , b_i)\neq (0,
0)\}$. An AQECC with parameters ${((n, K,d_{z}/ d_{x}))}_{q}$ is
an $K$-dimensional subspace of the Hilbert space ${\mathbb
C}^{q^n}$ and corrects all qudit-flip errors up to $\lfloor
\frac{d_{x}-1}{2} \rfloor$ and all phase-shift errors up to
$\lfloor \frac{d_{z}-1}{2} \rfloor$. An ${((n, q^{k}, d_{z}/
d_{x}))}_{q}$ code is denoted by ${[[n, k, d_{z}/ d_{x}]]}_{q}$.

Let us recall the well-known CSS construction:

\begin{lemma}\cite{Ketkar:2006,Calderbank:1998,Nielsen:2000}(CSS
construction)\label{CSS1} Let $C_1$ and $C_2$ denote two classical
linear codes with parameters ${[n, k_1, d_1]}_{q}$ and ${[n, k_2,
d_2]}_{q}$, respectively. Assume that $C_2\subset C_1$. Then there
exists an AQECC with parameters $[[n, K = k_1-k_2,
d_{z}/d_{x}]{]}_{q}$, where $d_{x}=$wt$(C_2^{\perp} \backslash
C_1^{\perp}) \}$ and $d_{z}=$wt$(C_1 \backslash C_2)$. The
resulting code is said pure if, in the above construction, $d_{x}
= d(C_{2}^{\perp})$ and $d_{z} = d(C_{1})$.
\end{lemma}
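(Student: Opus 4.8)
The plan is to realise the desired code as a CSS-type stabilizer code inside ${\bf G}_{n}$, and then to read off $K$, $d_{x}$ and $d_{z}$ from the Hamming geometry of $C_{1}$, $C_{2}$ and their Euclidean duals. First I would fix generator matrices of $C_{1}^{\perp}$ and of $C_{2}$ and let $S\leq{\bf G}_{n}$ be the subgroup generated by the operators $X({\bf a})$, with ${\bf a}$ ranging over the rows of the former, together with the operators $Z({\bf b})$, with ${\bf b}$ ranging over the rows of the latter. Since $C_{2}\subset C_{1}$ is equivalent to $C_{1}^{\perp}\subseteq C_{2}^{\perp}$, every such ${\bf a}\in C_{1}^{\perp}$ is Euclidean-orthogonal to every such ${\bf b}\in C_{2}$; as $X({\bf a})$ and $Z({\bf b})$ commute exactly when tr$_{q/p}({\bf a}\cdot{\bf b})=0$, the group $S$ is abelian, and one checks that $S\cap Z({\bf G}_{n})=\{1\}$. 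Consequently $|S|=q^{(n-k_{1})+k_{2}}$, and the common $(+1)$-eigenspace $Q$ of $S$ has dimension $q^{n}/|S|=q^{k_{1}-k_{2}}=K$.

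Next I would pin down which errors act as scalars on $Q$ and which are detectable. A short computation with the trace-symplectic form shows that $w^{c}X({\bf a})Z({\bf b})\in C_{{\bf G}_{n}}(S)$ if and only if ${\bf a}\in C_{2}^{\perp}$ and ${\bf b}\in(C_{1}^{\perp})^{\perp}=C_{1}$, whereas $w^{c}X({\bf a})Z({\bf b})\in SZ({\bf G}_{n})$ if and only if ${\bf a}\in C_{1}^{\perp}$ and ${\bf b}\in C_{2}$. By the standard detection criterion for stabilizer codes \cite{Calderbank:1998,Ketkar:2006}, the errors that $Q$ fails to detect without acting as scalars are precisely the elements of $C_{{\bf G}_{n}}(S)\backslash SZ({\bf G}_{n})$; because the conditions on ${\bf a}$ and on ${\bf b}$ decouple for a CSS stabilizer, such an element $w^{c}X({\bf a})Z({\bf b})$ satisfies ${\bf a}\in C_{2}^{\perp}\backslash C_{1}^{\perp}$ or ${\bf b}\in C_{1}\backslash C_{2}$. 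In particular, a purely qudit-flip error $X({\bf a})$ is of this kind exactly when ${\bf a}\in C_{2}^{\perp}\backslash C_{1}^{\perp}$, and a purely phase-shift error $Z({\bf b})$ exactly when ${\bf b}\in C_{1}\backslash C_{2}$, so the relevant threshold weights are $d_{x}=$ wt$(C_{2}^{\perp}\backslash C_{1}^{\perp})$ and $d_{z}=$ wt$(C_{1}\backslash C_{2})$.

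Finally I would derive correction and purity. For two qudit-flip errors $X({\bf a}_{1})$ and $X({\bf a}_{2})$ of weight at most $\lfloor(d_{x}-1)/2\rfloor$, the operator $X({\bf a}_{1})^{\dagger}X({\bf a}_{2})=X({\bf a}_{2}-{\bf a}_{1})$ has $X$-weight at most $d_{x}-1$, hence ${\bf a}_{2}-{\bf a}_{1}\notin C_{2}^{\perp}\backslash C_{1}^{\perp}$, so by the previous step it is not one of the harmful undetectable errors; the stabilizer correction criterion then gives that all such qudit-flip errors are correctable, and symmetrically all phase-shift errors of weight at most $\lfloor(d_{z}-1)/2\rfloor$, as well as --- the two parts again decoupling --- all mixed errors with $X$-weight at most $\lfloor(d_{x}-1)/2\rfloor$ and $Z$-weight at most $\lfloor(d_{z}-1)/2\rfloor$. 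Hence $Q$ is an $[[n,k_{1}-k_{2},d_{z}/d_{x}]]_{q}$ code. For the purity statement, the smallest weight of a nonzero $X$-type element of $C_{{\bf G}_{n}}(S)$ equals $d(C_{2}^{\perp})$ and that of a nonzero $Z$-type element equals $d(C_{1})$, so $Q$ is pure exactly when $d_{x}=d(C_{2}^{\perp})$ and $d_{z}=d(C_{1})$. I expect the only delicate point to be the centralizer computation in the second step --- checking that the CSS structure truly separates the ${\bf a}$-part from the ${\bf b}$-part, so that the ``$X$-distance'' is governed solely by $C_{2}^{\perp}\backslash C_{1}^{\perp}$ and the ``$Z$-distance'' solely by $C_{1}\backslash C_{2}$; after that, the rest is routine bookkeeping with Hamming weights, and the same argument, with the trace-symplectic or trace-alternating form replacing the Euclidean dual, would also handle the additive-code version mentioned in the Introduction.
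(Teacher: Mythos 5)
Your proposal is the standard stabilizer-formalism proof of the asymmetric CSS construction and is correct; the paper itself states Lemma~\ref{CSS1} without proof, citing \cite{Ketkar:2006,Calderbank:1998,Nielsen:2000}, and your argument matches the one in those references (identify the CSS stabilizer, compute the centralizer and $SZ({\bf G}_n)$, observe that the $X$- and $Z$-conditions decouple, and read off $d_x$ and $d_z$ from $C_2^{\perp}\backslash C_1^{\perp}$ and $C_1\backslash C_2$). The only imprecision is that for $q=p^t$ with $t>1$ the group generated by the $X({\bf a})$ and $Z({\bf b})$ coming from the \emph{rows} of $\mathbb{F}_q$-generator matrices is only the $\mathbb{F}_p$-span of those rows, so to get $|S|=q^{(n-k_1)+k_2}$ you should take ${\bf a}$ and ${\bf b}$ ranging over $\mathbb{F}_p$-bases of $C_1^{\perp}$ and $C_2$ (equivalently, over all of $C_1^{\perp}$ and $C_2$); this is a one-line fix.
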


Since the Euclidean dual of a code $C$ and its Hermitian dual are
isomorphic under Galois conjugation that preserves Hamming metric,
a similar result can be derived if one considers in
Lemma~\ref{CSS1} the Hermitian inner product instead of
considering the Euclidean inner product and we shall call the
mentioned construction by CSS-type construction. Recently, the CSS
construction was extended to include additive codes \cite[Theorem
4.5]{Ezerman:2010}.

The following result shown in \cite{Ketkar:2006} will be utilized
in this paper:

 \begin{theorem}\cite[Theorem 13]{Ketkar:2006}
An $((n, K, d))_{q}$ stabilizer code exists if and only if there
exists an additive code $C\leq {\mathbb F}_{q}^{2n}$ of size
$|C|=q^{n}/K$ such that $C\leq C^{{\perp}_{s}}$ and
swt$(C^{{\perp}_{s}}\backslash C) = d$ if $K > 1$ (and
swt$(C^{{\perp}_{s}})=d$ if $K=1$).
\end{theorem}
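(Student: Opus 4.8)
The plan is to reduce the statement to the standard dictionary between subgroups of the error group $\mathbf{G}_n$ and additive subcodes of $\mathbb{F}_q^{2n}$. First I would introduce the map $\psi\colon \mathbf{G}_n \to \mathbb{F}_q^{2n}$ given by $\psi(w^{c}X(\mathbf{a})Z(\mathbf{b})) = (\mathbf{a}|\mathbf{b})$. Using the multiplication rule $X(\mathbf{a})Z(\mathbf{b})X(\mathbf{a}')Z(\mathbf{b}') = w^{\mathrm{tr}_{q/p}(\mathbf{b}\cdot\mathbf{a}')}X(\mathbf{a}+\mathbf{a}')Z(\mathbf{b}+\mathbf{b}')$, one checks that $\psi$ is a surjective group homomorphism with kernel $Z(\mathbf{G}_n) = \{w^{c}I\}$, that $\mathrm{swt}(e) = \mathrm{swt}(\psi(e))$, and that two elements $e,e' \in \mathbf{G}_n$ commute if and only if $\langle \psi(e)|\psi(e')\rangle_{s} = 0$. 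This last equivalence is exactly what converts group-theoretic conditions into code-theoretic ones.

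For the forward direction, given an $((n,K,d))_q$ stabilizer code $Q$ with stabilizer group $S \leq \mathbf{G}_n$, I would set $C = \psi(S)$. Since $S$ is a stabilizer group it contains no nontrivial scalar, so $\psi|_{S}$ is injective and $|C| = |S| = q^{n}/K$; the fact that $S$ is abelian gives $C \subseteq C^{\perp_s}$; and $\psi$ maps the centralizer $C_{\mathbf{G}_n}(S)$ onto $C^{\perp_s}$ while $SZ(\mathbf{G}_n)$ maps onto $C$. The detection criterion for stabilizer codes says that $Q$ fails to detect a nonzero error $e$ exactly when $e \in C_{\mathbf{G}_n}(S)\setminus SZ(\mathbf{G}_n)$; pushing this through $\psi$, the least symplectic weight of such an $e$ is $\mathrm{swt}(C^{\perp_s}\setminus C)$ when that set is nonempty (equivalently $K>1$) and $\mathrm{swt}(C^{\perp_s})$ when $C = C^{\perp_s}$ (equivalently $K=1$), which in both cases is the minimum distance $d$.

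For the converse, given $C \leq \mathbb{F}_q^{2n}$ with $C \subseteq C^{\perp_s}$ and $|C| = q^{n}/K$, I would lift $C$ to an abelian subgroup $S \leq \mathbf{G}_n$ with $S \cap Z(\mathbf{G}_n) = \{I\}$ and $\psi(S) = C$: fix a generating set of $C$, lift each generator through $\psi$, use $C \subseteq C^{\perp_s}$ to see the chosen lifts commute, and rescale them by suitable roots of unity so the resulting subgroup avoids the center. The joint $+1$-eigenspace of $S$ is then a subspace of dimension $q^{n}/|S| = K$, and the distance computation is the one above read in reverse. The step I expect to be the main obstacle is precisely this lifting: the naive section $(\mathbf{a}|\mathbf{b}) \mapsto X(\mathbf{a})Z(\mathbf{b})$ is not a homomorphism, so one must argue carefully that a genuine abelian subgroup meeting $Z(\mathbf{G}_n)$ trivially exists, with the prime $p=2$ (where $-I$ can appear) needing a small separate order/parity argument. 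The remaining points — the dimension count $q^n/|S|$ and keeping the impure case $K>1$ distinct from the degenerate case $K=1$ in the distance formula — are routine bookkeeping once $\psi$ is set up.
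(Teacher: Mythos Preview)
The paper does not actually prove this theorem: it is quoted verbatim from \cite[Theorem~13]{Ketkar:2006} and used as a black box in Section~\ref{sec4}, so there is no ``paper's own proof'' to compare against. Your outline is essentially the standard proof given in \cite{Ketkar:2006}: set up the surjection $\psi:\mathbf{G}_n\to\mathbb{F}_q^{2n}$, identify commutation with vanishing of the trace-symplectic form, and translate the stabilizer/centralizer data into $C\le C^{\perp_s}$ and the weight condition. That is exactly the route taken in the cited reference, and your identification of the lifting step (producing an abelian $S$ with $S\cap Z(\mathbf{G}_n)=\{I\}$ from $C$) as the nontrivial point is accurate; in \cite{Ketkar:2006} this is handled via a character-theoretic argument rather than the generator-by-generator rescaling you sketch, which sidesteps the $p=2$ order issue you flag.
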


\section{Construction Methods}\label{sec4}

This section is devoted to construct new AQECC from old ones. More
precisely, we show how to obtain new codes by extending,
puncturing, expanding, applying the direct sum and, finally, by
using the $({ \bf u}| { \bf u}+{ \bf v})$ construction. In other
words, we extend to AQECC all those methods valid to QECC.

\subsection{Code Expansion}\label{sub4.1}

Let us recall the concept of dual basis \cite{Lidl:1997}. Given a
basis $\beta=\{ b_1, b_2, \ldots, b_{m} \}$ of ${\mathbb
F}_{q^{m}}$ over ${\mathbb F}_{q}$, a \emph{dual basis} of $\beta$
is given by ${\beta}^{\perp}=\{ {b_1}^{\ast}, {b_2}^{\ast},\ldots,
{b_{m}}^{\ast} \}$, with tr$_{q^{m}/q}(b_{i}{b_{j}}^{\ast})
=\delta_{ij}$, for all $i,j\in \{1,\ldots, m \}$. A self-dual
basis $\beta$ is a basis satisfying $\beta={\beta}^{\perp}$. If
$C$ is an ${[n, k, d_1]}_{q^{m}}$ code and $\beta=\{b_1, b_2,
\ldots, b_{m}\}$ is a basis of ${\mathbb F}_{q^{m}}$ over
${\mathbb F}_{q}$, then the $q$-ary expansion $\beta(C)$ of $C$
with respect to $\beta$ is an ${[mn, mk, d_{2} \geq d_1]}_{q}$
code given by $\beta(C):= \{ {(c_{ij})}_{i,j}\in {{\mathbb
F}_{q}}^{mn} \mid\textbf{c}= {(\sum_{j}^{} c_{ij} b_{j} {)}_{i}}
\in C \}$.

\begin{lemma}\label{dualcode}\cite{Grassl:1999,Ashikhmin:2000,LaGuardia:2012}
Let $C={[n, k, d]}_{q^{m}}$ be a linear code over ${\mathbb
F}_{q^{m}}$, where $q$ is a prime power. Let ${C}^{\perp}$ be the
dual of the code $C$. Then the dual code of the $q$-ary expansion
$\beta(C)$ of code $C$ with respect to the basis $\beta$ is the
$q$-ary expansion ${{\beta}^{\perp}}({C}^{\perp})$ of the dual
code ${C}^{\perp}$ with respect to ${\beta}^{\perp}$.
\end{lemma}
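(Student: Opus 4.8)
The plan is to show that the two $q$-ary expansions $\beta(C)^{\perp}$ and $\beta^{\perp}(C^{\perp})$ are equal by a dimension count together with one containment, and the containment follows from a direct computation of the Euclidean inner product of a codeword of $\beta(C)$ with a codeword of $\beta^{\perp}(C^{\perp})$, using the defining trace property $\mathrm{tr}_{q^m/q}(b_i b_j^{\ast}) = \delta_{ij}$ of the dual basis.

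First I would observe that the dimensions match: $\beta(C)$ has dimension $mk$ over ${\mathbb F}_q$, so $\beta(C)^{\perp}$ has dimension $m n - m k = m(n-k)$; on the other hand $C^{\perp}$ has dimension $n-k$ over ${\mathbb F}_{q^m}$, so its expansion $\beta^{\perp}(C^{\perp})$ has dimension $m(n-k)$ over ${\mathbb F}_q$ as well. Hence it suffices to prove the inclusion $\beta^{\perp}(C^{\perp}) \subseteq \beta(C)^{\perp}$, i.e. that every element of $\beta^{\perp}(C^{\perp})$ is Euclidean-orthogonal to every element of $\beta(C)$.

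Next I would take $\mathbf{x} = (x_1,\dots,x_n) \in C$ and $\mathbf{y} = (y_1,\dots,y_n) \in C^{\perp}$, and write their expansions coordinatewise: $x_\ell = \sum_{j} c_{\ell j} b_j$ with $(c_{\ell j})_{\ell,j} \in \beta(C)$, and $y_\ell = \sum_{i} d_{\ell i} b_i^{\ast}$ with $(d_{\ell i})_{\ell,i} \in \beta^{\perp}(C^{\perp})$. The key computation is
\begin{equation*}
0 = \mathrm{tr}_{q^m/q}\!\left( \sum_{\ell=1}^{n} x_\ell y_\ell \right)
  = \sum_{\ell=1}^{n} \sum_{i,j} c_{\ell j} d_{\ell i} \, \mathrm{tr}_{q^m/q}(b_j b_i^{\ast})
  = \sum_{\ell=1}^{n} \sum_{i,j} c_{\ell j} d_{\ell i} \, \delta_{ij}
  = \sum_{\ell=1}^{n} \sum_{i} c_{\ell i} d_{\ell i},
\end{equation*}
where the first equality uses $\mathbf{x} \cdot \mathbf{y} = 0$ in ${\mathbb F}_{q^m}$ (so its trace is $0$), and ${\mathbb F}_q$-linearity of the trace lets us pull the ${\mathbb F}_q$-scalars $c_{\ell j}, d_{\ell i}$ outside. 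The final sum $\sum_{\ell,i} c_{\ell i} d_{\ell i}$ is precisely the Euclidean inner product in ${\mathbb F}_q^{mn}$ of the expanded vectors, so it vanishes, giving the desired orthogonality and hence the inclusion.

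The only mild subtlety—the part I would be most careful about—is bookkeeping: making sure the indexing convention for the expansion (which index runs over the $n$ original coordinates and which over the $m$ basis coefficients) is used consistently, and that the dual basis is attached to the dual code (coefficients of $\mathbf{y}$ expanded in $\beta^{\perp}$, coefficients of $\mathbf{x}$ expanded in $\beta$), so that the cross terms collapse via $\mathrm{tr}_{q^m/q}(b_j b_i^{\ast}) = \delta_{ij}$ rather than producing a nondiagonal Gram matrix. Once the orthogonality and the dimension equality are both in hand, equality of the two codes is immediate, and the lemma follows; no further obstacle is expected.
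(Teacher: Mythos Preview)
Your argument is correct and is the standard proof of this fact: a dimension count reduces the equality to one inclusion, and that inclusion follows from the trace computation using the defining property $\mathrm{tr}_{q^m/q}(b_j b_i^{\ast})=\delta_{ij}$ of the dual basis. Note, however, that the paper does not give its own proof of this lemma; it simply quotes it as a known result from \cite{Grassl:1999,Ashikhmin:2000,LaGuardia:2012}, so there is no in-paper proof to compare against, but your write-up is precisely the argument one finds in those references.
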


Theorem~\ref{MAINI} presents a method to construct AQECC by
expanding linear codes:

\begin{theorem}\label{MAINI}
Let $q$ be a prime power. Assume that there exists an AQECC with
parameters ${[[n, k, d_{z}/ d_{x}]]}_{q^{m}}$, derived from linear
codes $C_1={[n, k_{1}, d_{1} ]}_{q^{m}}$ and $C_{2}={[n, k_{2},
d_{2}]}_{q^{m}}$, respectively. Then there exists an AQECC with
parameters ${[[mn, mk, d_{z}^{*}/ d_{x}^{*}]]}_{q}$, where
$k=k_{1} - k_{2}$, $d_{z}^{*}\geq d_1$ and $d_{x}^{*}\geq
d_{2}^{\perp}$, where $d_{2}^{\perp}$ denotes the minimum distance
of the dual code $C_{2}^{\perp}$.
\end{theorem}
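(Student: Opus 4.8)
The plan is to reduce the construction of the expanded asymmetric quantum code to an application of the CSS construction (Lemma~\ref{CSS1}) after expanding the underlying classical nested codes. By hypothesis, the AQECC with parameters ${[[n, k, d_{z}/d_{x}]]}_{q^{m}}$ arises from a pair $C_{2} \subset C_{1}$ over ${\mathbb F}_{q^{m}}$ with $d_{z} = \mathrm{wt}(C_{1} \backslash C_{2})$ and $d_{x} = \mathrm{wt}(C_{2}^{\perp} \backslash C_{1}^{\perp})$, so the first step is to fix a basis $\beta$ of ${\mathbb F}_{q^{m}}$ over ${\mathbb F}_{q}$ and form the $q$-ary expansions $\beta(C_{1})$ and $\beta(C_{2})$. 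Since expansion with respect to a fixed basis is an ${\mathbb F}_{q}$-linear, injective map that sends ${\mathbb F}_{q^{m}}$-subspaces to ${\mathbb F}_{q}$-subspaces and preserves inclusion, we immediately get $\beta(C_{2}) \subset \beta(C_{1})$, with parameters $\beta(C_{1}) = {[mn, mk_{1}, \geq d_{1}]}_{q}$ and $\beta(C_{2}) = {[mn, mk_{2}, \geq d_{2}]}_{q}$. Hence the dimension of the resulting quantum code is $mk_{1} - mk_{2} = m(k_{1}-k_{2}) = mk$, as claimed.

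Next I would apply Lemma~\ref{CSS1} to the nested pair $\beta(C_{2}) \subset \beta(C_{1})$ over ${\mathbb F}_{q}$. This yields an AQECC with parameters ${[[mn, mk, d_{z}^{*}/d_{x}^{*}]]}_{q}$ where $d_{z}^{*} = \mathrm{wt}(\beta(C_{1}) \backslash \beta(C_{2}))$ and $d_{x}^{*} = \mathrm{wt}(\beta(C_{2})^{\perp} \backslash \beta(C_{1})^{\perp})$. For the $d_{z}^{*}$ bound, the key observation is that expansion is a weight-nondecreasing bijection between $C_{1}$ and $\beta(C_{1})$: if $\mathbf{c} \in C_{1} \backslash C_{2}$, then its expansion lies in $\beta(C_{1}) \backslash \beta(C_{2})$ (again by injectivity and the preservation of the subcode $C_{2}$), and $\mathrm{wt}(\beta(\mathbf{c})) \geq \mathrm{wt}(\mathbf{c})$ because a nonzero coordinate of $\mathbf{c}$ contributes at least one nonzero ${\mathbb F}_{q}$-coordinate in its expansion. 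Therefore $d_{z}^{*} = \mathrm{wt}(\beta(C_{1})\backslash\beta(C_{2})) \geq \mathrm{wt}(C_{1}\backslash C_{2}) = d_{z} \geq d_{1}$.

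For the $d_{x}^{*}$ bound I would invoke Lemma~\ref{dualcode}: the dual of the expansion $\beta(C_{i})$ is the expansion ${\beta^{\perp}}(C_{i}^{\perp})$ of the dual code with respect to the dual basis $\beta^{\perp}$. Thus $\beta(C_{2})^{\perp} = {\beta^{\perp}}(C_{2}^{\perp})$ and $\beta(C_{1})^{\perp} = {\beta^{\perp}}(C_{1}^{\perp})$, and since $C_{1}^{\perp} \subset C_{2}^{\perp}$ (dualizing the inclusion $C_{2}\subset C_{1}$), the same expansion-is-subcode-preserving argument gives ${\beta^{\perp}}(C_{1}^{\perp}) \subset {\beta^{\perp}}(C_{2}^{\perp})$. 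Applying the same weight-nondecreasing bijection argument to the pair $C_{1}^{\perp} \subset C_{2}^{\perp}$ and their $\beta^{\perp}$-expansions, I obtain $d_{x}^{*} = \mathrm{wt}({\beta^{\perp}}(C_{2}^{\perp}) \backslash {\beta^{\perp}}(C_{1}^{\perp})) \geq \mathrm{wt}(C_{2}^{\perp}\backslash C_{1}^{\perp}) = d_{x} \geq d(C_{2}^{\perp}) = d_{2}^{\perp}$, which completes the proof.

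The routine parts here are the linear-algebra facts about expansion (injectivity, ${\mathbb F}_{q}$-linearity, inclusion-preservation, and the weight inequality $\mathrm{wt}(\beta(\mathbf c)) \geq \mathrm{wt}(\mathbf c)$), all of which are standard and essentially recorded already in the definition of $\beta(C)$ and in Lemmas~\ref{dualcode} and \ref{CSS1}. The one point that requires a little care — and is the main thing to get right — is the handling of the \emph{set differences} $C_{1}\backslash C_{2}$ and $C_{2}^{\perp}\backslash C_{1}^{\perp}$ under expansion: one must check that expansion maps the difference $C_{1}\backslash C_{2}$ \emph{onto} $\beta(C_{1})\backslash\beta(C_{2})$ (not merely into it), which follows because $\beta$ is a bijection $C_{1}\to\beta(C_{1})$ carrying $C_{2}$ exactly onto $\beta(C_{2})$, so that a vector of $\beta(C_{1})$ lies outside $\beta(C_{2})$ precisely when its preimage lies outside $C_{2}$. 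With that identification the minimum-weight comparison on the set differences goes through verbatim, and the asymmetric parameters $d_{z}^{*}\geq d_{1}$, $d_{x}^{*}\geq d_{2}^{\perp}$ follow.
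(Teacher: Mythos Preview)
Your proof is correct and follows essentially the same approach as the paper: expand the nested pair $C_{2}\subset C_{1}$ with respect to a basis $\beta$, use Lemma~\ref{dualcode} to identify $[\beta(C_{2})]^{\perp}$ with $\beta^{\perp}(C_{2}^{\perp})$, and then apply the CSS construction to the expanded pair. The one minor difference is that you track the set differences $C_{1}\backslash C_{2}$ and $C_{2}^{\perp}\backslash C_{1}^{\perp}$ explicitly through the expansion bijection, obtaining the slightly sharper intermediate inequalities $d_{z}^{*}\geq d_{z}$ and $d_{x}^{*}\geq d_{x}$, whereas the paper simply bounds $d_{z}^{*}$ and $d_{x}^{*}$ by the full minimum distances $d(\beta(C_{1}))\geq d_{1}$ and $d([\beta(C_{2})]^{\perp})\geq d_{2}^{\perp}$; both routes yield the same final conclusion.
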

\begin{proof}
The proof presented here utilizes the same idea and generalizes
the proof of \cite[Theorem 1]{LaGuardia:2012} to all linear codes.
We begin by observing that ${[\beta(C)]}^{\perp}= \beta^{\perp}
(C^{\perp})$. Let $C_1={[n, k_{1}, d_{1} ]}_{q^{m}}$ and $C_2={[n,
k_{2}, d_{2}]}_{q^{m}}$ be two codes such that $C_{2}\subset
C_{1}$. Let $\beta$ be any basis of ${\mathbb F}_{q^{m}}$ over
${\mathbb F}_{q}$ and ${\beta}^{\perp}$ its dual basis. Consider
the expansions $\beta(C_{1})$ of $C_1$ and $\beta(C_{2})$ of
$C_{2}$ with respect to $\beta$. Then the inclusion $\beta(C_{2})
\subset \beta(C_{1})$ holds. The codes $\beta(C_1),$ $\beta(C_2)$
and ${[\beta({C_2})]}^{\perp}$ are linear. Further,
$\beta(C_{1})={[mn, mk_{1}, D_{1}\geq d_1 ]}_{q}$ and
$\beta(C_{2})={[mn, m k_{2}, D_{2}\geq d_{2}]}_{q}$, respectively.
Since $C_{2}^{\perp}$ has minimum distance $d_{2}^{\perp}$, then
$\beta^{\perp} (C_2^{\perp})$ has minimum distance greater than or
equal to $d_{2}^{\perp}$ ($\beta^{\perp}$ is a basis of ${\mathbb
F}_{q^{m}}$ over ${\mathbb F}_{q}$). From Lemma~\ref{dualcode} the
equality ${[\beta({C_2})]}^{\perp}= \beta^{\perp} (C_2^{\perp})$
holds, hence ${[\beta(C_{2})]}^{\perp}$ also has minimum distance
greater than or equal to $d_{2}^{\perp}$. Applying the CSS
construction to $\beta(C_{1})$, $\beta(C_{2})$ and
${[\beta({C_{2}})]}^{\perp}$, one obtains an ${[[mn,
m(k_{1}-k_{2}), d_{z}^{*}/ d_{x}^{*}]]}_{q}$ asymmetric quantum
code, where $d_{z}^{*}\geq d_{1}$ and $d_{x}^{*}\geq
d_{2}^{\perp}$.
\end{proof}

More generally one has the following result:

\begin{theorem}\label{GenExpa}
Let $q=p^{t}$ be a prime power. If there exists an $((n, K,
d_{z}/d_{x}))_{q^{m}}$ stabilizer code then there exists an $((nm,
K, d_{z}^{*}/d_{x}^{*}))_{q}$ stabilizer code, where
$d_{z}^{*}\geq d_{z}$ and $d_{x}^{*}\geq d_{x}$.
\end{theorem}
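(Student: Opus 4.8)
The plan is to run the same expansion argument used in Theorem~\ref{MAINI}, but at the level of the trace-symplectic characterization of stabilizer codes given by \cite[Theorem 13]{Ketkar:2006}, so that it applies to arbitrary (additive) stabilizer codes rather than only CSS-type ones. First I would invoke that theorem to convert the hypothesis: an $((n, K, d_z/d_x))_{q^m}$ stabilizer code corresponds to an additive code $C \le \mathbb{F}_{q^m}^{2n}$ with $|C| = q^{mn}/K$, $C \le C^{\perp_s}$, and the two one-sided weights $d_z, d_x$ read off from $C^{\perp_s}\setminus C$ (the $Z$-weight and $X$-weight of the coset representatives). I would then fix a basis $\beta$ of $\mathbb{F}_{q^m}$ over $\mathbb{F}_q$ and its dual basis $\beta^\perp$, and expand coordinatewise: applying $\beta$ to the ``$X$-part'' and $\beta^\perp$ to the ``$Z$-part'' of each vector in $\mathbb{F}_{q^m}^{2n}$ produces an additive code $\beta(C) \le \mathbb{F}_q^{2nm}$ of the same size $q^{mn}/K$.

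The key step is to check that this mixed expansion respects the trace-symplectic form, i.e. that $[\beta(C)]^{\perp_s} = \beta(C^{\perp_s})$ (with the same $\beta$/$\beta^\perp$ convention on the two halves). This is the natural analogue of Lemma~\ref{dualcode}: the trace-symplectic form on $\mathbb{F}_{q^m}^{2n}$ is $\mathrm{tr}_{q/p}(\mathbf{b}\cdot\mathbf{a}^* - \mathbf{b}^*\cdot\mathbf{a})$, and since the inner products $\mathbf{b}\cdot\mathbf{a}^*$ etc. already lie in $\mathbb{F}_{q^m}$, expanding one factor in $\beta$ and the conjugate factor in $\beta^\perp$ makes the $\mathbb{F}_q$-bilinear pairing transparent via $\mathrm{tr}_{q^m/q}(b_i b_j^*) = \delta_{ij}$, and then composing with $\mathrm{tr}_{q/p}$ gives exactly the trace-symplectic form downstairs. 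I expect this duality identity to be the main obstacle — not because it is deep, but because one must be careful about which half gets which basis so that the cross terms $\mathbf{b}\cdot\mathbf{a}^*$ and $\mathbf{b}^*\cdot\mathbf{a}$ both collapse correctly; once set up properly it is a routine computation.

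Having established $C \le C^{\perp_s} \Rightarrow \beta(C) \le [\beta(C)]^{\perp_s}$ and the size count, it remains to track the one-sided distances. A nonzero vector $(\mathbf{a}|\mathbf{b}) \in \mathbb{F}_{q^m}^{2n}$ with, say, $X$-weight $w$ (that is, $w$ nonzero entries among the $a_i$) expands to a vector whose $X$-part has at least $w$ nonzero entries over $\mathbb{F}_q$, since a nonzero element of $\mathbb{F}_{q^m}$ has a nonzero coordinate vector in any basis; the same holds for the $Z$-part. Hence each coset of $\beta(C)$ in $[\beta(C)]^{\perp_s} = \beta(C^{\perp_s})$ — which is the $\beta$-image of a coset of $C$ in $C^{\perp_s}$ — contains a representative whose $X$-weight and $Z$-weight are each at least as large as those of the corresponding coset upstairs. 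Therefore $\mathrm{wt}_X([\beta(C)]^{\perp_s}\setminus \beta(C)) \ge d_x$ and $\mathrm{wt}_Z([\beta(C)]^{\perp_s}\setminus \beta(C)) \ge d_z$. Applying \cite[Theorem 13]{Ketkar:2006} in the reverse direction (together with the asymmetric refinement that separates $Z$- and $X$-weights, exactly as in Lemma~\ref{CSS1}) yields an $((nm, K, d_z^*/d_x^*))_q$ stabilizer code with $d_z^* \ge d_z$ and $d_x^* \ge d_x$, as claimed.
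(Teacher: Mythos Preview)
Your proposal is correct and follows essentially the same route as the paper: both convert the stabilizer code into a trace-symplectic self-orthogonal additive code via \cite[Theorem 13]{Ketkar:2006}, push it through a coordinatewise expansion map that preserves trace-symplectic orthogonality, and then observe that $X$- and $Z$-weights can only increase under expansion. The only cosmetic difference is that the paper packages the expansion map using the Gram matrix $M=(\mathrm{tr}_{q^m/q}(\beta_i\beta_j))$ on the $Z$-half (citing \cite[Lemma 76]{Ketkar:2006}), whereas you expand the $Z$-half in the dual basis $\beta^\perp$; since $Mc_\beta(a)=c_{\beta^\perp}(a)$, these are literally the same isometry.
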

\begin{proof}
If $a$ is an element of ${ \mathbb F}_{q^{m}}$, we can expand $a$
with respect to a given basis $B=\{ {\beta}_1 , \ldots ,
{\beta}_{m} \}$ of ${ \mathbb F}_{q^{m}}$ over ${\mathbb F}_{q}$
and put the coordinates of $a$ in the vector form $c_{B}(a)=(a_1 ,
\ldots , a_{m}) \in { \mathbb F}_{q}^{m}$. Consider the
non-degenerate symmetric form tr$_{q^{m}/q}(ab)$ on the vector
space ${ \mathbb F}_{q^{m}}$ (over ${\mathbb F}_{q}$). Assume that
${\varphi}_{B}$ is the ${\mathbb F}_{p}$-vector space isomorphism
from ${\mathbb F}_{q^{m}}^{2n}$ to ${\mathbb F}_{q}^{2nm}$ given
(in the proof of \cite[Lemma 76]{Ketkar:2006}) by
${\varphi}_{B}(({ \bf u}|{ \bf v}))=((c_{B}(u_1), \ldots ,
c_{B}(u_{n}))|( M c_{B}(v_1), \ldots , M c_{B}(v_{n})))$, where ${
\bf u}, \ { \bf v} \in {\mathbb F}_{q^{m}}^{n}$ are given by ${
\bf u}= (u_1, \ldots , u_n)$ and ${ \bf v}= (v_1, \ldots , v_n)$,
$M=($tr$_{q^{m}/q}({{\beta}_{i}}{{\beta}_{j}}))_{1\leq i, j\leq
m}$ denotes the Gram matrix and
tr$_{q^{m}/q}(ab)={c_{B}(a)}^{t}Mc_{B}(b)$ for all $a, b \in {
\mathbb F}_{q^{m}}$. Note that the inner product considered here
is the usual (Euclidean) inner product of ${\mathbb F}_{q}$.

Assume that an $((n, K, d_{z}/d_{x}))_{q^{m}}$ stabilizer code
exists. From \cite[Theorem 13]{Ketkar:2006}, there exists an
additive code $C\leq {\mathbb F}_{q^{m}}^{2n}$ of size
$|C|=q^{mn}/K$ such that $C\leq C^{{\perp}_{s}}$,
wt$_{X}(C^{{\perp}_{s}}\backslash C) = d_{x}$ if $K > 1$ (and
wt$_{X}(C^{{\perp}_{s}})=d_{x}$ if $K=1$) and
wt$_{Z}(C^{{\perp}_{s}}\backslash C) = d_{z}$ if $K > 1$ (and
wt$_{Z}(C^{{\perp}_{s}})=d_{z}$ if $K=1$). We know that
${\varphi}_{B}$ preserves trace-symplectic orthogonality, i. e.,
the code ${\varphi}_{B}(C)$ satisfies ${\varphi}_{B}(C)\leq
[{\varphi}_{B}(C)]^{{\perp}_{s}}$. If $({ \bf u}|{ \bf v}) \in
{\mathbb F}_{q^{m}}^{2n}$ and $u_{i}\neq 0$ (resp. $v_{j}\neq 0$)
for some $i \in \{1, \ldots , n\}$ (resp. $j \in \{1, \ldots ,
n\}$), then at least one coordinate of the corresponding vector
$c_{B}(u_{i})$ (resp. $M c_{B}(v_{j})$) is nonzero. Thus
wt$_{X}([{\varphi}_{B}(C)]^{{\perp}_{s}}\backslash
{\varphi}_{B}(C)) \geq d_{x}$ if $K > 1$ (and
wt$_{X}([{\varphi}_{B}(C)]^{{\perp}_{s}})\geq d_{x}$ if $K=1$) and
wt$_{Z}([{\varphi}_{B}(C)]^{{\perp}_{s}}\backslash
{\varphi}_{B}(C)) \geq d_{z}$ if $K
> 1$ (and wt$_{Z}([{\varphi}_{B}(C)]^{{\perp}_{s}}) \geq d_{z}$ if $K=1$).
Because the alphabet considered now is ${ \mathbb F}_{q}$, then
there exists an $((nm, K, d_{z}^{*}/d_{x}^{*}))_{q}$ stabilizer
code, where $d_{z}^{*}\geq d_{z}$ and $d_{x}^{*}\geq d_{x}$.
\end{proof}

\subsection{Direct Sum Codes}\label{sub4.2}

Let us recall the direct sum of codes. Assume that $C_1={[n_{1},
k_{1}, d_{1} ]}_{q}$ and $C_{2}={[n_{2}, k_{2}, d_{2}]}_{q}$ are
two linear codes. Then the direct sum code $C_{1}\oplus C_{2}$ is
the linear code given by $C_{1}\oplus C_{2}=\{ ({{\bf c}}_{1},
{{\bf c}}_{2})| {{\bf c}}_{1} \in C_{1}, {{\bf c}}_{2}\in C_{2}
\}$ and has parameters $[n_1+n_{2}, k_{1}+k_{2}, \min$
$\{d_{1},d_{2}\} {]}_{q}$.

\begin{theorem}\label{MAINII}
Let $q$ be a prime power. Assume there exists an AQECC with parameters
${[[n, k, d_{z}/ d_{x}]]}_{q}$ derived from linear codes
$C_{1}=[n, k_{1},$ $d_{1}{]}_{q}$ and $C_{2}={[n, k_{2}, d_{2}]}_{q}$ with $C_{2}
\subset C_1$. Suppose also there exists an $[[n^{*}, k^{*},$ $d_{z}^{*}/
d_{x}^{*}]{]}_{q}$ AQECC derived from classical linear codes
$C_{3}={[n^{*}, k_{3}, d_3]}_{q}$ and $C_{4}={[n, k_{4}, d_{4}]}_{q}$
with $C_{4} \subset C_3$. Then there exists an
${[[n+n^{*}, k+ k^{*}, d_{z}^{\diamond}/
d_{x}^{\diamond }]]}_{q}=[[n+n^{*}, (k_{1}+$ $k_3 )-(k_{2} + k_{4}), d_{z}^{\diamond}/
d_{x}^{\diamond }]{]}_{q}$ AQECC, where $d_{z}^{\diamond}\geq
\min \{ d_1 , d_{3} \}$, $d_{x}^{\diamond}\geq \min \{ d_{2}^{\perp },
d_{4}^{\perp }\}$ and $d_{2}^{\perp}$, $d_{4}^{\perp }$ are the
minimum distances of the dual codes $C_{2}^{\perp }$ and $C_{4}^{\perp }$,
respectively.
\end{theorem}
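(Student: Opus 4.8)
The plan is to realize the new code via the CSS construction (Lemma~\ref{CSS1}) applied to the direct sums $C_{1}\oplus C_{3}$ and $C_{2}\oplus C_{4}$, in the same spirit as the proof of Theorem~\ref{MAINI} but with expansion replaced by direct sum. First I would collect the elementary facts about direct sums. Since $C_{2}\subset C_{1}$ and $C_{4}\subset C_{3}$, we get the inclusion $C_{2}\oplus C_{4}\subset C_{1}\oplus C_{3}$, both codes are linear, and they have length $n+n^{*}$ with dimensions $k_{1}+k_{3}$ and $k_{2}+k_{4}$, respectively. I would also record the standard duality identity $(D\oplus E)^{\perp}=D^{\perp}\oplus E^{\perp}$ for the Euclidean dual: a vector $({\bf x}|{\bf y})$ annihilates every element of $D\oplus E$ if and only if ${\bf x}\in D^{\perp}$ and ${\bf y}\in E^{\perp}$, because $D\oplus E$ contains all $({\bf d}|{\bf 0})$ with ${\bf d}\in D$ and all $({\bf 0}|{\bf e})$ with ${\bf e}\in E$. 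In particular $(C_{1}\oplus C_{3})^{\perp}=C_{1}^{\perp}\oplus C_{3}^{\perp}$ and $(C_{2}\oplus C_{4})^{\perp}=C_{2}^{\perp}\oplus C_{4}^{\perp}$, and from $C_{1}^{\perp}\subset C_{2}^{\perp}$, $C_{3}^{\perp}\subset C_{4}^{\perp}$ we obtain $(C_{1}\oplus C_{3})^{\perp}\subset(C_{2}\oplus C_{4})^{\perp}$.

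Next I would apply Lemma~\ref{CSS1} to the nested pair $C_{2}\oplus C_{4}\subset C_{1}\oplus C_{3}$. This produces an AQECC of length $n+n^{*}$ and dimension $(k_{1}+k_{3})-(k_{2}+k_{4})=k+k^{*}$, with $d_{z}^{\diamond}=$wt$\big((C_{1}\oplus C_{3})\setminus(C_{2}\oplus C_{4})\big)$ and, using the duality identity just recorded, $d_{x}^{\diamond}=$wt$\big((C_{2}^{\perp}\oplus C_{4}^{\perp})\setminus(C_{1}^{\perp}\oplus C_{3}^{\perp})\big)$. It then remains only to bound these two weights from below by $\min\{d_{1},d_{3}\}$ and $\min\{d_{2}^{\perp},d_{4}^{\perp}\}$, respectively.

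For the first bound, let $({\bf c}_{1}|{\bf c}_{3})\in(C_{1}\oplus C_{3})\setminus(C_{2}\oplus C_{4})$; then ${\bf c}_{1}\notin C_{2}$ or ${\bf c}_{3}\notin C_{4}$. In the first case ${\bf c}_{1}$ is a nonzero codeword of $C_{1}$, so wt$({\bf c}_{1}|{\bf c}_{3})\geq$ wt$({\bf c}_{1})\geq d_{1}$; in the second case symmetrically wt$({\bf c}_{1}|{\bf c}_{3})\geq d_{3}$; hence $d_{z}^{\diamond}\geq\min\{d_{1},d_{3}\}$. The identical argument applied to the nested pair $C_{1}^{\perp}\oplus C_{3}^{\perp}\subset C_{2}^{\perp}\oplus C_{4}^{\perp}$ --- now using that $C_{2}^{\perp}$ and $C_{4}^{\perp}$ have minimum distances $d_{2}^{\perp}$ and $d_{4}^{\perp}$ --- gives $d_{x}^{\diamond}\geq\min\{d_{2}^{\perp},d_{4}^{\perp}\}$, which finishes the construction. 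The only point demanding a little care, and the nearest thing to an obstacle, is that the CSS parameters are minimum weights of \emph{set differences}, not minimum distances of full codes; one cannot simply invoke $d(C_{1}\oplus C_{3})=\min\{d_{1},d_{3}\}$ but must argue on the component representatives as above, where the crucial point is that every vector in $C_1\setminus C_2$ (and likewise $C_3\setminus C_4$) is a nonzero codeword. Everything else is routine bookkeeping.
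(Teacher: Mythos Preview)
Your proof is correct and follows essentially the same approach as the paper: form the direct sums $C_{1}\oplus C_{3}\supset C_{2}\oplus C_{4}$, identify $(C_{2}\oplus C_{4})^{\perp}=C_{2}^{\perp}\oplus C_{4}^{\perp}$ (the paper does this via the block-diagonal parity-check matrix, you via a direct orthogonality argument), and apply the CSS construction. Your treatment of the set-difference weights is slightly more explicit than the paper's, which simply quotes the minimum distances of the full direct-sum codes, but the content is the same.
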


\begin{proof}
Consider the direct sum codes $C_{1}\oplus C_{3}={[n+n^{*},
k_{1}+k_{3}, \min \{d_{1},d_{3}\} ]}_{q}$ and $C_{2}\oplus
C_{4}={[n+n^{*}, k_{2}+k_{4}, \min \{d_{2}, d_{4}\} ]}_{q}$. Since
the inclusions $C_{2}\subset C_{1}$ and $C_{4}\subset C_3$ hold it
follows that the inclusion $C_{2} \oplus C_{4} \subset C_{1}
\oplus C_3$ also holds. We know that a parity check matrix of the
code ${(C_{2} \oplus C_{4})}^{\perp }$ is given by $G_{2}\oplus
G_{4}= \left[
\begin{array}{cc}
G_{2} & 0\\
0 & G_{4}\\
\end{array}
\right].$ Thus the minimum distance of ${(C_{2} \oplus
C_{4})}^{\perp }$ is equal to $\min \{ d_{2}^{\perp },
d_{4}^{\perp } \}$. Therefore, applying the CSS construction to
the codes $C_{1} \oplus C_3$, $C_{2} \oplus C_{4}$ and ${(C_{2}
\oplus C_{4})}^{\perp }$ one obtains an $[[n+n^{*}, (k_{1}+k_3 )$
$-(k_{2} + k_{4}), d_{z}^{\diamond}/ d_{x}^{\diamond }]{]}_{q}$
AQECC, where $d_{z}^{\diamond}\geq\min \{ d_1 , d_{3} \}$ and
$d_{x}^{\diamond}\geq \min \{ d_{2}^{\perp }, d_{4}^{\perp }\}$.
\end{proof}

The previous result also holds in a more general setting:

\begin{theorem}
Assume that there exist two stabilizer codes with parameters
$((n_{1}, K_1, d_{z}^{(1)}/d_{x}^{(1)}))_{q}$ and $((n_{2}, K_{2},
d_{z}^{(2)}/d_{x}^{(2)}))_{q}$. Then there exists an
$((n_{1}+n_{2}, K_{1}K_{2}, d_{z}^{*}/d_{x}^{*}))_{q}$, where
$d_{z}^{*} = \min \{d_{z}^{(1)}, d_{z}^{(2)} \}$ and $d_{x}^{*} =
\min \{d_{x}^{(1)}, d_{x}^{(2)} \}$.
\end{theorem}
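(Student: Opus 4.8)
The plan is to mirror the proof of Theorem~\ref{MAINII}, but with the CSS construction replaced by the additive (trace-symplectic) description of stabilizer codes, i.e. the asymmetric refinement of \cite[Theorem 13]{Ketkar:2006} already invoked in the proof of Theorem~\ref{GenExpa}. First I would apply that description to the hypotheses: there exist additive codes $C_{1}\leq {\mathbb F}_{q}^{2n_{1}}$ and $C_{2}\leq {\mathbb F}_{q}^{2n_{2}}$ with $|C_{i}|=q^{n_{i}}/K_{i}$ and $C_{i}\leq C_{i}^{{\perp}_{s}}$, such that $d_{x}^{(i)}$ and $d_{z}^{(i)}$ are, respectively, the minimum $X$-weight and the minimum $Z$-weight of $C_{i}^{{\perp}_{s}}\backslash C_{i}$ (when $K_{i}>1$; read $C_{i}^{{\perp}_{s}}$ in place of $C_{i}^{{\perp}_{s}}\backslash C_{i}$ when $K_{i}=1$).

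Next I would form the ``interleaved direct sum'' additive code $C:=\{({\bf a}_{1},{\bf a}_{2}\,|\,{\bf b}_{1},{\bf b}_{2}):({\bf a}_{i}\,|\,{\bf b}_{i})\in C_{i},\ i=1,2\}\leq {\mathbb F}_{q}^{2(n_{1}+n_{2})}$ (the two $X$-parts concatenated, then the two $Z$-parts); it is additive of size $|C|=|C_{1}||C_{2}|=q^{n_{1}+n_{2}}/(K_{1}K_{2})$. The step with actual content is to compute its trace-symplectic dual, namely $C^{{\perp}_{s}}=\{({\bf a}_{1},{\bf a}_{2}\,|\,{\bf b}_{1},{\bf b}_{2}):({\bf a}_{i}\,|\,{\bf b}_{i})\in C_{i}^{{\perp}_{s}},\ i=1,2\}$. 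This follows from a one-line expansion of the defining trace form showing that, with this interleaving of coordinates, the trace-symplectic form on ${\mathbb F}_{q}^{2(n_{1}+n_{2})}$ splits as the orthogonal sum of the trace-symplectic forms on the two blocks. From this split, $C_{1}^{{\perp}_{s}}\oplus C_{2}^{{\perp}_{s}}\subseteq C^{{\perp}_{s}}$ is immediate (in particular $C\leq C^{{\perp}_{s}}$), and the count $|C_{1}^{{\perp}_{s}}\oplus C_{2}^{{\perp}_{s}}|=(q^{2n_{1}}/|C_{1}|)(q^{2n_{2}}/|C_{2}|)=q^{2(n_{1}+n_{2})}/|C|=|C^{{\perp}_{s}}|$ upgrades the inclusion to an equality.

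Finally I would read off the distances. A vector $({\bf a}_{1},{\bf a}_{2}\,|\,{\bf b}_{1},{\bf b}_{2})$ lies in $C^{{\perp}_{s}}\backslash C$ exactly when $({\bf a}_{i}\,|\,{\bf b}_{i})\in C_{i}^{{\perp}_{s}}$ for both $i$ while $({\bf a}_{j}\,|\,{\bf b}_{j})\notin C_{j}$ for at least one $j$. Since the $X$-weight (resp. $Z$-weight) of a concatenation is the sum of the $X$-weights (resp. $Z$-weights) of the blocks, the ``bad'' block $j$ alone forces $X$-weight at least $d_{x}^{(j)}\geq\min\{d_{x}^{(1)},d_{x}^{(2)}\}$; conversely this value is attained by taking the other block to be ${\bf 0}$ (which lies in $C_{i}\subseteq C_{i}^{{\perp}_{s}}$) and block $j$ a minimum-$X$-weight vector of $C_{j}^{{\perp}_{s}}\backslash C_{j}$. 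Hence the minimum $X$-weight of $C^{{\perp}_{s}}\backslash C$ equals $\min\{d_{x}^{(1)},d_{x}^{(2)}\}$ exactly, and symmetrically its minimum $Z$-weight equals $\min\{d_{z}^{(1)},d_{z}^{(2)}\}$; invoking \cite[Theorem 13]{Ketkar:2006} once more yields the claimed $((n_{1}+n_{2},K_{1}K_{2},d_{z}^{*}/d_{x}^{*}))_{q}$ stabilizer code (the degenerate case $K_{1}=K_{2}=1$ is identical, with $C^{{\perp}_{s}}$ replacing $C^{{\perp}_{s}}\backslash C$ throughout). The only point demanding care is the dual computation $C^{{\perp}_{s}}=C_{1}^{{\perp}_{s}}\oplus C_{2}^{{\perp}_{s}}$ together with keeping the $X$/$Z$ interleaving consistent so that the form genuinely splits; everything else is bookkeeping parallel to Theorem~\ref{MAINII}. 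Note that this route pins the two distances down exactly, not merely from below as in Theorem~\ref{MAINII}, since the additive description computes them rather than only bounding them.
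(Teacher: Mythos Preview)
Your proof is correct, but it takes a genuinely different route from the paper's. The paper follows \cite[Lemma~73]{Ketkar:2006} and works directly in the operator picture: it takes the tensor product $P_{1}\otimes P_{2}$ of the orthogonal projectors, identifies the stabilizer of the resulting code as $S^{\oplus}=\{E_{1}\otimes E_{2}\mid E_{i}\in S_{i}\}$, and then argues that an undetectable error $F_{1}\otimes F_{2}$ must lie in $C_{{\bf G}_{n_{1}}}(S_{1})\times C_{{\bf G}_{n_{2}}}(S_{2})$ with at least one factor outside $S_{j}Z({\bf G}_{n_{j}})$, whence that factor alone already has $X$-weight at least $d_{x}^{(j)}$. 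You instead pass through the additive-code correspondence \cite[Theorem~13]{Ketkar:2006}, form the interleaved direct sum $C=C_{1}\oplus C_{2}\leq{\mathbb F}_{q}^{2(n_{1}+n_{2})}$, compute $C^{{\perp}_{s}}=C_{1}^{{\perp}_{s}}\oplus C_{2}^{{\perp}_{s}}$ by splitting the trace-symplectic form and counting, and read off the $X$/$Z$-weights of $C^{{\perp}_{s}}\backslash C$ coordinatewise.

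The two approaches are dual to each other via the stabilizer/additive-code dictionary; your centralizer condition and stabilizer-coset condition become, on the additive side, exactly the membership in $C_{j}^{{\perp}_{s}}$ and non-membership in $C_{j}$. What your route buys is that it stays entirely combinatorial (no Hilbert spaces or error groups), it parallels Theorem~\ref{MAINII} cleanly as you intended, and it makes the \emph{exact} equalities $d_{x}^{*}=\min\{d_{x}^{(1)},d_{x}^{(2)}\}$, $d_{z}^{*}=\min\{d_{z}^{(1)},d_{z}^{(2)}\}$ explicit via the ``zero-out-the-other-block'' witnesses. The paper's operator argument is closer to the physical construction (tensor product of codespaces) and reuses \cite[Lemma~11]{Ketkar:2006} off the shelf. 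One small remark: you handle the case $K_{1}=K_{2}=1$, but the mixed case (exactly one $K_{i}=1$) deserves a sentence too; there $C_{i}=C_{i}^{{\perp}_{s}}$ forces the ``bad'' index $j$ to be the other one, and the argument goes through unchanged.
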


\begin{proof}
The proof follows the same line of \cite[Lemma 73]{Ketkar:2006}.
We only show the result in the case of $X$-weight (the proof for
$Z$-weight is similar). Note that if $((n_{1}, K_1,
d_{z}^{(1)}/d_{x}^{(1)}))_{q}$ and $((n_{2}, K_{2},
d_{z}^{(2)}/d_{x}^{(2)}))_{q}$ are stabilizer codes with
orthogonal projectors $P_1$ and $P_{2}$ respectively, and
stabilizer $S_{1}$ and $S_{2}$ respectively, then $P_{1}\otimes
P_{2}$ is an orthogonal projector onto a $K_{1}K_{2}$-dimensional
subspace $Q^{\oplus}$ of ${\mathbb C}^{q^{(n_{1}+n_{2})}}$, and
the stabilizer of $Q^{\oplus}$ is given by
$S^{\oplus}=\{E_{1}\otimes E_{2} | E_{1} \in S_{1}, E_{2} \in
S_{2}\}$.  Assume that $F_1 \otimes F_{2}\in {{ \bf
G}}_{n_1}\otimes {{ \bf G}}_{n_{2}}$ is not detectable; hence $F_1
\in C_{{{ \bf G}}_{n_1}}(S_1)$ and $F_{2} \in C_{{{ \bf
G}}_{n_{2}}}(S_{2})$. Moreover, either $F_1 \notin S_{1}Z({{ \bf
G}}_{n_1})$ or $F_{2} \notin S_{2}Z({{ \bf G}}_{n_{2}})$,
otherwise $F_1 \otimes F_{2}$ would be detectable. Thus, from
\cite[Lemma 11]{Ketkar:2006}, either $F_1$ or $F_2$ is not
detectable, so wt$_{X}(F_1 \otimes F_{2})$ is at least $\min
\{d_{x}^{(1)}, d_{x}^{(2)} \}$, and the result follows.
\end{proof}

\subsection{Puncturing Codes}\label{sub4.3}

The technique of puncturing codes is well-known in the literature
as in the classical \cite{Macwilliams:1977,Huffman:2003} as well
as in the quantum case
\cite{Calderbank:1998,Rains:1999,Ketkar:2006}. In this section we
show how to construct AQECC by puncturing classical codes.

Let $C$ be an ${[n, k, d]}_{q}$ code. Then we denote by
$C^{P_{i}}$ the punctured code in the coordinate $i$. Recall that
the dual of a punctured code is a shortened code. Now we are ready
to show the main result of this subsection:

\begin{theorem}\label{MAINIII}
Assume that there exists an ${[[n, k, d_{z}/d_{x}]]}_{q}$
stabilizer code derived from two linear codes $C_{1}={[n, k_{1},
d_{1}]}_{q}$ and $C_{2}={[n, k_{2}, d_{2}]}_{q}$ with $C_{2}
\subset C_1$, $n \geq 2$, $k=k_1 - k_{2}$, $d_{z}\geq d_1$ and
$d_{x}\geq d_{2}^{\perp}$, where $d_{2}^{\perp}$ is the minimum
distance of the dual code $C_{2}^{\perp}$. Suppose also that
$d_{1}\geq 2$, $d_{2}^{\perp}\geq 2$ and $C_{2}^{\perp}$ contains
at least a nonzero codeword with $i$th coordinate zero. Then the
following hold:
\begin{enumerate}

\item [ (i)] If $C_1$ has a minimum weight codeword
with a nonzero $i$th coordinate then there exists an ${[[n-1, k,
d_{z}^{P_{i}}/d_{x}^{P_{i}}]]}_{q}$ AQECC, where $k=k_1 - k_{2}$, $d_{z}^{P_{i}}
\geq d_1 -1$ and $d_{x}^{P_{i}}\geq d_{2}^{\perp}$;

 \item [ (ii)] If $C_1$ has no minimum weight codeword with a nonzero $i$th
coordinate, then there exists an ${[[n-1, k,
d_{z}^{P_{i}}/d_{x}^{P_{i}}]]}_{q}$ AQECC, where $k=k_1 - k_{2}$, $d_{z}^{P_{i}}
\geq d_1$ and $d_{x}^{P_{i}}\geq d_{2}^{\perp}\geq 2$.
\end{enumerate}
\end{theorem}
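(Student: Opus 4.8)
The plan is to mimic the structure of the proof of Theorem~\ref{MAINI}, replacing the expansion operation by the puncturing operation and carefully tracking how puncturing affects the two constituent codes $C_1$ and $C_2$ (equivalently $C_2^\perp$) in the CSS construction. First I would record the two classical facts that do the real work: if $C$ is an ${[n,k,d]}_q$ code and some minimum-weight codeword of $C$ has a nonzero $i$th coordinate, then $C^{P_i}$ has parameters ${[n-1, k, d-1]}_q$ (or better), whereas if no minimum-weight codeword is nonzero in position $i$, then $C^{P_i}$ has parameters ${[n-1, k, d']}_q$ with $d' \ge d$; and the dimension is preserved as long as $d \ge 2$ (so that no two codewords agree off coordinate $i$). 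I would also invoke the standard duality fact already mentioned in the text: puncturing $C_2$ at coordinate $i$ is dual to shortening $C_2^\perp$ at coordinate $i$, i.e. $(C_2^{P_i})^\perp = (C_2^\perp)_{S_i}$, the subcode of $C_2^\perp$ of codewords vanishing in position $i$, with that coordinate deleted.

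With these in hand, the argument in both cases runs as follows. Puncturing all of $C_1$, $C_2$ at coordinate $i$ preserves the inclusion $C_2^{P_i} \subset C_1^{P_i}$, since puncturing is coordinatewise deletion and hence an ${\mathbb F}_q$-linear map that respects containment. Because $d_1 \ge 2$ and $d_2 \ge 2$ (the latter needed to keep $\dim C_2^{P_i} = k_2$ so that $\dim C_1^{P_i} - \dim C_2^{P_i} = k_1 - k_2 = k$), the punctured pair yields, via the CSS construction of Lemma~\ref{CSS1}, an ${[[n-1, k, d_z^{P_i}/d_x^{P_i}]]}_q$ AQECC. For the $Z$-distance: in case (i) a minimum-weight codeword of $C_1$ survives (shortened by one) so $d_z^{P_i} = \mathrm{wt}(C_1^{P_i}) \ge d_1 - 1$; in case (ii) no minimum-weight word of $C_1$ is hit, so $d_z^{P_i} \ge d_1$. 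For the $X$-distance in both cases: $d_x^{P_i} = \mathrm{wt}\big((C_2^{P_i})^\perp \setminus (C_1^{P_i})^\perp\big)$, and since $(C_2^{P_i})^\perp = (C_2^\perp)_{S_i}$ is a subcode of $C_2^\perp$ (with coordinate $i$ removed), every nonzero codeword of it came from a codeword of $C_2^\perp$ of the same weight (its $i$th coordinate was zero), so $\mathrm{wt}\big((C_2^{P_i})^\perp\big) \ge \mathrm{wt}(C_2^\perp) = d_2^\perp$; hence $d_x^{P_i} \ge d_2^\perp$, and in case (ii) we moreover note $d_2^\perp \ge 2$ by hypothesis.

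The main obstacle — and the reason the hypotheses ``$d_2^\perp \ge 2$'' and ``$C_2^\perp$ contains a nonzero codeword with $i$th coordinate zero'' are imposed — is making sure the shortened code $(C_2^\perp)_{S_i}$ is actually nontrivial and has the right dimension, so that the dual-of-punctured identity can be applied and the resulting quantum code still has positive parameters. Concretely: the hypothesis that $C_2^\perp$ has a nonzero codeword vanishing at coordinate $i$ guarantees $(C_2^\perp)_{S_i} \neq \{0\}$, hence $\mathrm{wt}\big((C_2^{P_i})^\perp\big)$ is well-defined and $\ge d_2^\perp$; and $d_2^\perp \ge 2$ (equivalently $C_2^\perp$ has no coordinate that is identically zero, or rather $d(C_2^\perp)\geq 2$) together with the dimension bookkeeping forces $\dim C_2^{P_i} = k_2$, so that the CSS dimension $k = k_1 - k_2$ is unchanged after puncturing. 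I would also make explicit in case (i) that the existence of a minimum-weight codeword of $C_1$ nonzero at position $i$ is exactly what prevents the $Z$-distance from dropping by more than one. Once these bookkeeping points are handled, everything else is a direct application of Lemma~\ref{CSS1} to the punctured/shortened triple, exactly parallel to the proof of Theorem~\ref{MAINI}.
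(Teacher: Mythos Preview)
Your approach is correct and essentially identical to the paper's: puncture $C_1$ and $C_2$ at coordinate $i$, use the puncture/shorten duality $(C_2^{P_i})^\perp = (C_2^\perp)_{S_i}$ to control the $X$-distance, and apply the CSS construction. One small correction: the fact that $\dim C_2^{P_i} = k_2$ comes from $d_2 \geq 2$, which the paper derives from $C_2 \subset C_1$ and $d_1 \geq 2$ (not from $d_2^\perp \geq 2$ as your last paragraph suggests); otherwise your bookkeeping matches the paper's exactly.
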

\begin{proof}
We only prove item (ii) since the proof of (i) is similar to this
one. Consider the punctured codes $C_{1}^{P_{i}}$ and
$C_{2}^{P_{i}}$. Since the inclusion $C_{2}\subset C_1$ holds it
follows that $C_{2}^{P_{i}} \subset C_{1}^{P_{i}}$. Since from
hypothesis one has $d_{1}> 1$ then it follows that $d_{2} > 1$
because $C_{2} \subset C_1$; again from the hypothesis $C_1$ has
no minimum weight codeword with a nonzero $i$th coordinate. Thus,
by Theorem~\cite[Theorem 1.5.1]{Huffman:2003}, the punctured codes
$C_{1}^{P_{i}}$ and $C_{2}^{P_{i}}$ have parameters ${[n-1, k_1,
d_1]}_{q}$ and ${[n-1, k_{2}, d_{2}^{i}]}_{q}$, respectively,
where $d_{2}^{i}=d_{2}$ or $d_{2}^{i}=d_{2}-1$.

We need to compute the minimum distance of the code
${[C_{2}^{P_{i}}]}^{\perp}$ in order to apply the CSS
construction. To do this consider the code
${[C_{2}^{P_{i}}]}^{\perp}$. Since $C_{2}^{\perp}$ contains at
least a nonzero codeword whose $i$th coordinate is equal to zero
then $C_{2}^{\perp}$ has a subcode $C_{2}^{\perp}(\{i\}) \neq \{
{\bf 0 } \}$ and, consequently, the minimum distance
$d_{({C_{2}^{\perp})}_{i}}$ of $C_{2}^{\perp}(\{i\})$ satisfies
$d_{({C_{2}^{\perp})}_{i}}$ $\geq d_{2}^{\perp}$, where
$d_{2}^{\perp} > 1$. Since $d_{({C_{2}^{\perp})}_{i}}>1$ and
because (from definition) the code $C_{2}^{\perp}(\{i\})$ has no
minimum weight codeword with a nonzero $i$th coordinate, applying
again Theorem~\cite[Theorem 1.5.1]{Huffman:2003}, it implies that
the shortened code ${[C_{2}^{\perp}]}_{S_{i}}$ has minimum
distance equals $d_{({C_{2}^{\perp})}_{i}}$. From \cite[Theorem
1.5.7]{Huffman:2003} we know that ${[C_{2}^{P_{i}}]}^{\perp}=
{[C_{2}^{\perp}]}_{S_{i}}$, so the code
${[C_{2}^{P_{i}}]}^{\perp}$ has minimum distance
$d_{({C_{2}^{\perp})}_{i}}$, where $d_{({C_{2}^{\perp})}_{i}} \geq
d_{2}^{\perp}$. Therefore, applying the CSS construction to the
codes $C_{1}^{P_{i}}$, $C_{2}^{P_{i}}$ and
${[C_{2}^{P_{i}}]}^{\perp}$, one can derive an ${[[n-1, k,
d_{z}^{P_{i}} /d_{x}^{P_{i}}]]}_{q}$ AQECC, where $k=k_1 - k_{2}$,
$d_{z}^{P_{i}}\geq d_1$ and $d_{x}^{P_{i}} \geq
d_{({C_{2}^{\perp})}_{i}}\geq d_{2}^{\perp}\geq 2$.
\end{proof}

Following the lines adopted in \cite{Ketkar:2006} we can show a
more general result:

\begin{theorem}\label{MAINNEW}
Assume that a pure ${[[n, k, d_{z}/d_{x}]]}_{q}$ stabilizer code
exists, with $n \geq 2$ and $d_{x}, d_{z} \geq 2$. Then there
exists a pure ${[[n-1, k, d_{z}^{*}/d_{x}^{*}]]}_{q}$ stabilizer
code, where $d_{z}^{*}\geq d_{z}-1$ and $d_{x}^{*} \geq d_{x}-1$.
\end{theorem}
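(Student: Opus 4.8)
The plan is to mimic the argument underlying Theorem~\ref{MAINIII}, but now working directly at the level of the trace-symplectic description of stabilizer codes provided by \cite[Theorem 13]{Ketkar:2006}, rather than with a fixed CSS pair. First I would translate the hypothesis: a pure ${[[n, k, d_{z}/d_{x}]]}_{q}$ stabilizer code corresponds to an additive code $C \leq {\mathbb F}_{q}^{2n}$ with $|C| = q^{n}/K$, $C \leq C^{{\perp}_{s}}$, and (by purity) $\textrm{wt}_{X}(C^{{\perp}_{s}}) = d_{x}$, $\textrm{wt}_{Z}(C^{{\perp}_{s}}) = d_{z}$, with no smaller-weight detectable-but-nontrivial issue to worry about precisely because the code is pure. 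The goal is to produce an additive code $C' \leq {\mathbb F}_{q}^{2(n-1)}$ of the same index $q^{n-1}/K$ which is trace-symplectic self-orthogonal and whose dual has $X$- and $Z$-weights dropping by at most one.

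The key construction is puncturing in the symplectic sense: pick a coordinate $i \in \{1,\dots,n\}$ and delete the pair $(a_i, b_i)$ from every vector $({\bf a}|{\bf b}) \in {\mathbb F}_{q}^{2n}$, obtaining a map ${\mathbb F}_{q}^{2n} \to {\mathbb F}_{q}^{2(n-1)}$. Following \cite[Lemma~70 and Theorem~71]{Ketkar:2006} (the ``puncture/shorten'' pair for symplectic codes), I would set $C'$ to be the \emph{shortening} of $C$ at position $i$, i.e. the image under this projection of the subcode $\{({\bf a}|{\bf b}) \in C : (a_i,b_i) = (0,0)\}$. One then checks, exactly as in the Hamming-metric case but with the trace-symplectic form, that $(C')^{{\perp}_{s}}$ is the symplectic-\emph{puncturing} of $C^{{\perp}_{s}}$, so that every codeword of $(C')^{{\perp}_{s}}$ lifts to a codeword of $C^{{\perp}_{s}}$ of symplectic support at most one larger; deleting one coordinate can reduce each of $\textrm{wt}_{X}$ and $\textrm{wt}_{Z}$ by at most $1$. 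The size of $C'$ is $q^{n-1}/K$ provided the shortening actually drops the ${\mathbb F}_{p}$-dimension by exactly $t$ (where $q = p^{t}$), which is where purity and the hypotheses $d_{x}, d_{z} \geq 2$ enter: they guarantee that the coordinate functional $({\bf a}|{\bf b}) \mapsto (a_i, b_i)$ is not identically zero on $C^{{\perp}_{s}}$, so puncturing $C^{{\perp}_{s}}$ genuinely loses a full $q$-ary coordinate and dually the shortening of $C$ behaves as expected. Then $C' \leq (C')^{{\perp}_{s}}$ follows from $C \leq C^{{\perp}_{s}}$ together with the puncture/shorten duality, and invoking \cite[Theorem 13]{Ketkar:2006} in the reverse direction yields the desired pure ${[[n-1, k, d_{z}^{*}/d_{x}^{*}]]}_{q}$ code.

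The main obstacle I anticipate is the bookkeeping on dimensions and on purity: one must verify carefully that shortening $C$ at $i$ and puncturing $C^{{\perp}_{s}}$ at $i$ are genuinely trace-symplectic duals of each other (the $p$-linearity of everything, plus non-degeneracy of the trace-symplectic form restricted to the deleted coordinate pair, is what makes this work), and that the resulting $C'$ still has index a power of $q$ rather than merely a power of $p$. Once that is in place, the weight bounds $d_{z}^{*} \geq d_{z}-1$ and $d_{x}^{*} \geq d_{x}-1$ are immediate from the support-size argument, and purity of the new code is inherited because we argued at the level of $C^{{\perp}_{s}}$ directly (no nonzero codeword of $(C')^{{\perp}_{s}} \setminus C'$ of weight below the stated bound can exist without pulling back to a violation upstairs). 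A brief remark that the case analysis (whether $C^{{\perp}_{s}}$ has a minimum-weight codeword supported at $i$ or not) parallels items (i)--(ii) of Theorem~\ref{MAINIII} but is subsumed here by the weaker conclusion $d^{*} \geq d - 1$ would complete the write-up.
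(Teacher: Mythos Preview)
Your overall strategy --- shorten the additive code $C$ at a single symplectic coordinate, identify $(C')^{{\perp}_{s}}$ with the puncturing of $C^{{\perp}_{s}}$ at that coordinate, and observe that deleting one coordinate pair can lower each of wt$_{X}$ and wt$_{Z}$ by at most one --- is sound and is precisely the mechanism behind the result. The gap is in the dimension count. Shortening at the pair $(a_i,b_i)$ removes \emph{two} ${\mathbb F}_{q}$-coordinates, not one: under the hypothesis $\min(d_x,d_z)\geq 2$ the puncturing map on $C^{{\perp}_{s}}$ is \emph{injective} (no nonzero codeword of symplectic weight one), so $|(C^{{\perp}_{s}})_{P}| = |C^{{\perp}_{s}}| = q^{n+k}$, and duality in ${\mathbb F}_{q}^{2(n-1)}$ then forces $|C'| = q^{2(n-1)}/q^{n+k} = q^{n-k-2}$. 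Thus your $C'$ is the stabilizer of an $[[n-1,\,k+1,\,\cdot\,]]_{q}$ code, not an $[[n-1,\,k,\,\cdot\,]]_{q}$ one. (Your sentence ``puncturing $C^{{\perp}_{s}}$ genuinely loses a full $q$-ary coordinate'' has the direction reversed: the hypothesis guarantees no loss at all on that side.)

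The repair is routine: enlarge $C'$ to a self-orthogonal $C''$ with $|C''|=q\,|C'|$ by adjoining elements of $(C')^{{\perp}_{s}}\setminus C'$ (the standard ``decrease $k$ by one'' step for pure stabilizer codes); since $(C'')^{{\perp}_{s}}\subseteq (C')^{{\perp}_{s}}$, the bounds wt$_{X}\geq d_{x}-1$, wt$_{Z}\geq d_{z}-1$ and purity survive. The paper bypasses this bookkeeping by quoting \cite[Corollary~72]{Ketkar:2006} as a black box --- which already delivers a pure $[[n-1,k,d^{*}\geq d-1]]_{q}$ code together with its trace-alternating description $D\leq D^{{\perp}_{a}}\leq{\mathbb F}_{q^{2}}^{\,n-1}$ --- and then transports back through the isometry $\phi$ between ${\mathbb F}_{q^{2}}^{\,n-1}$ and ${\mathbb F}_{q}^{2(n-1)}$, using (implicitly) that the construction inside that corollary is deletion of a single coordinate to conclude the separate $X$- and $Z$-weight bounds. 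Your route is more self-contained once the dimension step is fixed; the paper's is shorter but leaves the asymmetric refinement to the reader.
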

\begin{proof}
Assume that a pure ${[[n, k, d_{z}/d_{x}]]}_{q}$ stabilizer code
exists, with the corresponding minimum distance $d$. From
\cite[Corollary 72]{Ketkar:2006}, there exists a pure ${[[n-1, k,
d^{*}\geq d-1]]}_{q}$ stabilizer code derived from an additive
self-orthogonal (with respect to the trace-alternating form) code
$D^{{\perp}_{a}}\leq {\mathbb F}_{q^{2}}^{n-1}$ with
wt$(D^{{\perp}_{a}})\geq d-1$. Consider the vectors ${ \bf v}, {
\bf w} \in {\mathbb F}_{q}^{2(n-1)}$ and let $( \beta,
{\beta}^{q})$ be a normal basis of ${\mathbb F}_{q}^{2}$ over
${\mathbb F}_{q}$. We know that the bijective map $\phi (({ \bf
v}|{ \bf w}))= \beta{ \bf v} + {\beta}^{q}{ \bf w}$ from ${\mathbb
F}_{q}^{2(n-1)}$ onto ${\mathbb F}_{q^{2}}^{n-1}$ is an isometry
(symplectic/Hamming weights, resp.) (see also \cite[Lemma
14]{Ketkar:2006}). Considering the inverse map ${\phi}^{-1}$ and
the corresponding additive code ${\phi}^{-1}(D^{{\perp}_{a}}) \leq
{\mathbb F}_{q}^{2(n-1)}$, it follows that
${\phi}^{-1}(D^{{\perp}_{a}})$ has minimum $X$-weight $d_{x}^{*}$
at least $d_{x}^{*} \geq d_{x}-1$ and the minimum $Z$-weight
$d_{z}^{*}$ at least $d_{z}^{*} \geq d_{z}-1$, and the proof is
complete.
\end{proof}

\begin{remark}
Note that the procedure adopted in Theorems~\ref{MAINIII} and
\ref{MAINNEW} can be generalized by puncturing codes on two or
more coordinates.
\end{remark}

\subsection{Code Extension}\label{sub4.4}

The technique of (classical) code extension
\cite{Macwilliams:1977,Huffman:2003} was derived also in the
quantum case \cite{Calderbank:1998,Ketkar:2006}. Here we extend to
AQECC the referred technique.

Let $C$ be an ${[n, k, d]}_{q}$ linear code over ${\mathbb
F}_{q}$. The extended code $C^{e}$ is the linear code given by
$C^{e} = \{ (x_1,\ldots, x_{n}, x_{n+1}) \in {\mathbb F}_{q}^{n+1}
|$ $(x_1,\ldots, x_{n})\in C, x_1 +\cdots +x_{n}+ x_{n+1}=0 \}$.
The code $C^{e}$ is linear and has parameters ${[n+1, k,
d^{e}]}_{q}$, where $d^{e}=d$ or $d^{e}=d+1$. Recall that a vector
${ \bf v}=(v_1, \ldots, v_{n}) \in F_{q}^{n}$ is called
\emph{even-like} if it satisfies the equality
$\displaystyle\sum_{i=1}^{n} v_{i}=0$, and \emph{odd-like}
otherwise. For an ${[n, k, d]}_{q}$ code $C$ the minimum weight of
the even-like codewords of $C$ are called \emph{minimum even-like
weight} and denoted by $d_{even}$ (or ${(d)}_{even}$). Similarly,
the minimum weight of the odd-like codewords of $C$ are called
\emph{minimum odd-like weight} and denoted by $d_{odd}$ (or
${(d)}_{odd}$).

Let us now prove the main result of this subsection.

\begin{theorem}\label{MAINIV}
Assume that there exists an ${[[n, k, d_{z}/ d_{x}]]}_{q}$ AQECC
derived from codes $C_1 = {[n, k_1, d_1]}_{q}$ and $C_{2}={[n,
k_{2}, d_{2}]}_{q}$, where $C_{2}\subset C_1$. Then the following
hold:

\begin{itemize}
\item[ (a)] If ${(d_{1})}_{even}\leq {(d_{1})}_{odd}$, then there
exists an ${[[n+1, k, d_{z}^{e}/ d_{x}^{e}]]}_{q}$ AQECC, where
$d_{z}^{e} \geq d_1$ and $d_{x}^{e}\geq {(d_{2}^{e})}^{\perp}$,
where ${(d_{2}^{e})}^{\perp}$ is the minimum distance of the dual
${(C_{2}^{e})}^{\perp }$ of the extended code $C_{2}^{e}$;

\item[ (b)] If ${(d_{1})}_{odd} < {(d_{1})}_{even}$, then there
exists an ${[[n+1, k, d_{z}^{e}/ d_{x}^{e}]]}_{q}$ AQECC, where
$d_{z}^{e}\geq d_1 +1$ and $d_{x}^{e}\geq {(d_{2}^{e})}^{\perp}$.
\end{itemize}
\end{theorem}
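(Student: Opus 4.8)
The plan is to mimic the strategy of Theorem~\ref{MAINI}, replacing ``expansion of a code'' by ``extension of a code'' and tracking how the minimum distances behave under extension. Starting from the hypothesis, we have linear codes $C_2 \subset C_1$ over ${\mathbb F}_q$ with parameters ${[n,k_1,d_1]}_q$ and ${[n,k_2,d_2]}_q$ giving rise to the AQECC via Lemma~\ref{CSS1}. First I would form the extended codes $C_1^e$ and $C_2^e$, each of length $n+1$ and with the same dimensions $k_1, k_2$ respectively. The key structural fact is that the parity-check equation $x_1+\cdots+x_{n+1}=0$ defining the extension is the same for both codes, so the inclusion $C_2 \subset C_1$ is preserved: $C_2^e \subset C_1^e$. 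Hence $(C_1^e, C_2^e)$ is again a nested pair to which the CSS construction applies, yielding an ${[[n+1, k_1-k_2, d_z^e/d_x^e]]}_q$ AQECC with $d_z^e \geq d(C_1^e)$ and $d_x^e \geq d((C_2^e)^\perp)$.

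The content of the two cases is the computation of $d(C_1^e)$. Here I would invoke the standard fact (as in \cite[Ch.~1]{Huffman:2003}) that extending a code by an overall parity check turns every odd-like codeword into a codeword of weight one larger and leaves every even-like codeword's weight unchanged; consequently $d(C_1^e) = \min\{(d_1)_{even},\ (d_1)_{odd}+1\}$. In case (a), where $(d_1)_{even} \leq (d_1)_{odd}$, we get $d(C_1^e) = (d_1)_{even} \geq d_1$ is not quite what we want directly; rather, since $d_1 = \min\{(d_1)_{even}, (d_1)_{odd}\} = (d_1)_{even}$ would force the minimum-weight codeword to be even-like, we conclude $d(C_1^e) \geq d_1$, giving $d_z^e \geq d_1$. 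In case (b), $(d_1)_{odd} < (d_1)_{even}$ means $d_1 = (d_1)_{odd}$ is attained only by odd-like codewords, so after extension $d(C_1^e) = \min\{(d_1)_{even}, (d_1)_{odd}+1\} \geq (d_1)_{odd}+1 = d_1+1$, giving $d_z^e \geq d_1+1$. For the $d_x$ side, both cases are identical: $(C_2^e)^\perp$ has some minimum distance $(d_2^e)^\perp$ by definition, and we simply set $d_x^e \geq (d_2^e)^\perp$, exactly as stated.

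The main obstacle — really the only subtle point — is the even-like/odd-like bookkeeping for $d(C_1^e)$, namely being careful that $d_1$ itself may be realized by even-like codewords, odd-like codewords, or both, and extracting the correct inequality in each of the two hypothesized regimes. Once the identity $d(C_1^e) = \min\{(d_1)_{even}, (d_1)_{odd}+1\}$ is in hand, the two cases fall out immediately. A minor point worth a sentence is that $C_2^e$ (hence $(C_2^e)^\perp$) is genuinely linear of the claimed length, which is immediate from the definition of the extension. I would close by noting that applying Lemma~\ref{CSS1} to the triple $C_1^e$, $C_2^e$, $(C_2^e)^\perp$ produces the asserted ${[[n+1, k, d_z^e/d_x^e]]}_q$ code in both cases.
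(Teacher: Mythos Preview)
Your proposal is correct and follows essentially the same route as the paper: verify that extension preserves the inclusion $C_2^e\subset C_1^e$, use the even-like/odd-like analysis (as in \cite[p.~15]{Huffman:2003}) to determine $d(C_1^e)$ in each case, and apply the CSS construction to the pair $(C_1^e,C_2^e)$. The paper proves only case~(b) and declares~(a) similar; your treatment spells out both, with the same underlying identity $d(C_1^e)=\min\{(d_1)_{even},(d_1)_{odd}+1\}$ doing the work.
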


\begin{proof}
We only show item  (b), since (a) is similar. It is easy to see
that the inclusion $C_{2}^{e} \subset C_{1}^{e}$ holds. The
parameters of the extended codes $C_{1}^{e}$ and $C_{2}^{e}$ are
${[n+1, k_1, d_{1}^{e}]}_{q}$ and ${[n+1, k_{2}, d_{2}^{e}]}_{q}$,
respectively, where $d_{1}^{e}=d_1$ or $d_{1}^{e}=d_1 + 1$. Since
${(d_{1})}_{odd} < {(d_{1})}_{even}$, it follows from the remark
shown in \cite[pg. 15]{Huffman:2003} that $d_{1}^{e}=d_1 + 1$.
From hypothesis we know that $k=k_1 - k_{2}$, so the corresponding
CSS code also has dimension $k$. Applying the CSS construction to
the codes $C_{1}^{e}$, $C_{2}^{e}$ and ${(C_{2}^{e})}^{\perp }$,
one obtains an AQECC with parameters ${[[n+1, k, d_{z}^{e}/
d_{x}^{e}]]}_{q}$, where $d_{z}^{e}\geq d_1 +1$ and $d_{x}^{e}\geq
{(d_{2}^{e})}^{\perp}$.
\end{proof}

\subsection{The $({ \bf u}| { \bf u}+{ \bf v})$ Construction}\label{sub4.5}

The $({ \bf u}| { \bf u}+{ \bf v})$ construction
\cite{Macwilliams:1977,Huffman:2003} is an interesting method for
constructing new (classical) linear codes. Our intention is to
apply this technique in order to generate a similar construction
method for asymmetric quantum codes.

Let $C_1$ and $C_{2}$ be two linear codes of same length both over
${\mathbb F}_{q}$ with parameters ${[n, k_1, d_1]}_{q}$ and ${[n,
k_{2}, d_{2}]}_{q}$, respectively. Then by applying the $({ \bf
u}| { \bf u}+{ \bf v})$ construction one can generate a new code
$C = \{ ({ \bf u}, { \bf u}+{ \bf v}) | { \bf u} \in C_1 , { \bf
v} \in C_{2} \}$ with parameters ${[2n, k_1 + k_{2}, \min \{
2d_{1}, d_{2} \}]}_{q}$. To simplify the notation, we denote the
code produced by applying the $({ \bf u}| { \bf u}+{ \bf v})$
construction to the codes $C_{1}$ and $C_{2}$ by $(C_{1}| C_{1} +
C_{2})$.

Theorem~\ref{MAINV} is the main result of this subsection:

\begin{theorem}\label{MAINV} Assume that there exist two
asymmetric stabilizer codes $[[n, k^{*}, d_{z}^{*}/$
$d_{x}^{*}]{]}_{q}$, derived from codes $C_1={[n, k_{1}, d_{1}
]}_{q}$ and $C_{2}={[n, k_{2}, d_{2} ]}_{q}$ with $C_{2}\subset
C_{1}$, and ${[[n, k^{\diamond }, d_{z}^{\diamond }/
d_{x}^{\diamond }]]}_{q}$, derived from codes $C_{3}={[n, k_3,
d_{3}]}_{q}$ and $C_{4}={[n, k_{4}, d_{4}]}_{q}$ with
$C_{4}\subset C_3$. Then there exists an ${[[2n,
k^{*}+k^{\diamond}, d_{z}/ d_{x}]]}_{q}$ AQECC, where $d_{z}\geq
\min \{ 2d_{1}, d_3 \}$, $d_{x}\geq \min \{ 2d_{4}^{\perp},
d_{2}^{\perp } \}$, with $d_{z}^{*}\geq d_1$, $d_{x}^{*}\geq
d_{2}^{\perp }$, $d_{z}^{\diamond }\geq d_3$ and $d_{x}^{\diamond
}\geq d_{4}^{\perp}$, where $d_{2}^{\perp }$ and $d_{4}^{\perp}$
are the minimum distances of the dual codes $C_{2}^{\perp}$ and
$C_{4}^{\perp}$, respectively.
\end{theorem}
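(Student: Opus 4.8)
The plan is to mimic the proofs of Theorems~\ref{MAINII} and~\ref{MAINIV}: reduce everything to the CSS construction (Lemma~\ref{CSS1}) by building the right pair of nested classical codes out of the $(\mathbf{u}|\mathbf{u}+\mathbf{v})$ construction and then controlling the minimum distances of the two relevant codes and of one dual. First I would form the classical codes $A = (C_{1}\,|\,C_{1}+C_{3})$ and $B = (C_{4}\,|\,C_{4}+C_{2})$, obtained by applying the $(\mathbf{u}|\mathbf{u}+\mathbf{v})$ construction, where the ``$\mathbf{u}$'' part comes from the first pair and the ``$\mathbf{v}$'' part from the second pair; here one must be a little careful about the ordering of the four codes so that the nesting $B\subset A$ holds and so that the dual $B^{\perp}$ factors nicely. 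With the choice above, $A$ has parameters $[2n,\,k_{1}+k_{3},\,\min\{2d_{1},d_{3}\}]_{q}$ and $B$ has parameters $[2n,\,k_{4}+k_{2},\,\min\{2d_{4},d_{2}\}]_{q}$.

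Next I would verify the inclusion $B \subset A$. Using $C_{4}\subset C_{3}\subset ?$ is not immediate, so the correct way is to check directly from the definition: a codeword of $B$ has the form $(\mathbf{w},\mathbf{w}+\mathbf{z})$ with $\mathbf{w}\in C_{4}$, $\mathbf{z}\in C_{2}$; to land in $A = \{(\mathbf{u},\mathbf{u}+\mathbf{v}) : \mathbf{u}\in C_{1},\mathbf{v}\in C_{3}\}$ we need $\mathbf{w}\in C_{1}$ and $\mathbf{z}+(\mathbf{w}-\mathbf{w})=\dots\in C_{3}$; since $C_{4}\subset C_{1}$ is not assumed, I would instead pair the codes as $A=(C_{1}\,|\,C_{1}+C_{3})$ with $C_{2}\subset C_{1}$, and take the inner code to be $B=(C_{2}\,|\,C_{2}+C_{4})$; then $\mathbf{w}\in C_{2}\subset C_{1}$ and $\mathbf{w}+\mathbf{z}$ with $\mathbf{z}\in C_{4}\subset C_{3}$ gives membership in $A$, so $B\subset A$ holds. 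The dimension of the resulting CSS code is then $(k_{1}+k_{3})-(k_{2}+k_{4}) = k^{*}+k^{\diamond}$, as claimed.

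The key technical step is computing (a lower bound for) the minimum distance of $B^{\perp}$. Here I would use the standard fact that the dual of a $(\mathbf{u}|\mathbf{u}+\mathbf{v})$ code $(D_{1}\,|\,D_{1}+D_{2})$ is $(D_{2}^{\perp}\,|\,D_{2}^{\perp}+D_{1}^{\perp})$ — more precisely $\{(\mathbf{x}+\mathbf{y},\mathbf{y}) : \mathbf{x}\in D_{2}^{\perp}, \mathbf{y}\in D_{1}^{\perp}\}$, which is again a $(\mathbf{u}|\mathbf{u}+\mathbf{v})$-type code after reordering coordinates, hence has minimum distance $\min\{2\,d(D_{2}^{\perp}),\, d(D_{1}^{\perp})\}$. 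Applying this with $D_{1}=C_{2}$, $D_{2}=C_{4}$ gives $d(B^{\perp}) = \min\{2d_{4}^{\perp},\, d_{2}^{\perp}\}$. Likewise $d(A)=\min\{2d_{1},d_{3}\}$. Feeding $A$, $B$ and $B^{\perp}$ into the CSS construction of Lemma~\ref{CSS1} yields an AQECC with $d_{z}\ge d(A)=\min\{2d_{1},d_{3}\}$ and $d_{x}\ge d(B^{\perp})=\min\{2d_{4}^{\perp},d_{2}^{\perp}\}$, which is exactly the asserted statement once one records that the hypotheses $d_{z}^{*}\ge d_{1}$, $d_{x}^{*}\ge d_{2}^{\perp}$, $d_{z}^{\diamond}\ge d_{3}$, $d_{x}^{\diamond}\ge d_{4}^{\perp}$ are inherited from the two input codes via Lemma~\ref{CSS1}.

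The main obstacle I anticipate is getting the bookkeeping of the four codes exactly right so that simultaneously (i) the inner/outer nesting $B\subset A$ holds with no unwarranted inclusion among $C_{1},C_{2},C_{3},C_{4}$, and (ii) the dual $B^{\perp}$ decomposes as a $(\mathbf{u}|\mathbf{u}+\mathbf{v})$ code whose distance is the claimed $\min\{2d_{4}^{\perp},d_{2}^{\perp}\}$ rather than $\min\{2d_{2}^{\perp},d_{4}^{\perp}\}$; this forces the asymmetric pairing (outer ``$\mathbf{v}$''-slot uses $C_{3}\supset C_{4}$, inner ``$\mathbf{v}$''-slot uses $C_{4}$) and a careful identification of which classical dual distance plays the role of the doubled term. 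Once that pairing is fixed, everything else is a routine application of the $(\mathbf{u}|\mathbf{u}+\mathbf{v})$ distance formula, its dual, and Lemma~\ref{CSS1}.
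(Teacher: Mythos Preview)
Your proposal is correct and mirrors the paper's proof: take $A=(C_{1}\,|\,C_{1}+C_{3})\supset B=(C_{2}\,|\,C_{2}+C_{4})$, compute $d(A)=\min\{2d_{1},d_{3}\}$ and $d(B^{\perp})=\min\{2d_{4}^{\perp},d_{2}^{\perp}\}$, and apply Lemma~\ref{CSS1}. Your route to $d(B^{\perp})$ via the fact that $B^{\perp}$ is equivalent to a $(\mathbf{u}|\mathbf{u}+\mathbf{v})$ code built from $C_{4}^{\perp}$ and $C_{2}^{\perp}$ is exactly the alternative proof the paper records (its primary argument is a direct case analysis on codewords $(\mathbf{u}-\mathbf{v},\mathbf{v})$ with $\mathbf{u}\in C_{2}^{\perp}$, $\mathbf{v}\in C_{4}^{\perp}$); just fix your intermediate parametrization of $B^{\perp}$, which should be $\{(\mathbf{u}-\mathbf{v},\mathbf{v}):\mathbf{u}\in C_{2}^{\perp},\ \mathbf{v}\in C_{4}^{\perp}\}$ rather than $\{(\mathbf{x}+\mathbf{y},\mathbf{y}):\mathbf{x}\in C_{4}^{\perp},\ \mathbf{y}\in C_{2}^{\perp}\}$---as you anticipated, this is only bookkeeping and your final distance formula is unaffected.
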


\begin{proof}
Since the inclusions $C_{2}\subset C_{1}$ and $C_{4}\subset C_3$
hold it follows that the inclusion $(C_{2}| C_{2} + C_{4}) \subset
(C_{1}| C_{1} + C_{3})$ also holds. We know that the codes
$(C_{2}| C_{2} + C_{4})$ and $(C_{1}| C_{1} + C_{3})$ have
parameters ${[2n, k_{2} + k_{4}, \min \{ 2d_{2}, d_{4} \}]}_{q}$
and ${[2n, k_1 + k_3, \min \{ 2d_{1}, d_3 \}]}_{q}$, respectively.
Let us compute the minimum distance of the dual code ${[(C_{2}|
C_{2} + C_{4})]}^{\perp }$. We know that a generator matrix of
${[(C_{2}| C_{2} + C_{4})]}^{\perp }$ is the matrix $\left[
\begin{array}{cc}
H_{2} & 0\\
-H_{4} & H_{4}\\
\end{array}
\right],$ where $H_{2}$ and $H_{4}$ are the parity check matrices
of $C_{2}$ and $C_{4}$, respectively. The codewords of ${[(C_{2}|
C_{2} + C_{4})]}^{\perp }$ are of the form $\{ ({ \bf u}-{ \bf v},
{ \bf v}) | { \bf u} \in C_{2}^{\perp} , { \bf v} \in
C_{4}^{\perp} \}$. Consider the codeword ${ \bf w}=({ \bf u}-{ \bf
v}, { \bf v})$. If ${ \bf u}=0$ then ${ \bf w}=(-{ \bf v}, { \bf
v})$, so the minimum weight of ${[(C_{2}| C_{2} + C_{4})]}^{\perp
}$ is given by $2d_{4}^{\perp}$. On the other hand, if ${ \bf
u}\neq 0$ then wt$({ \bf w})=$ wt$({ \bf u}-{ \bf v})+$ wt$({ \bf
v})=$ d$({ \bf u}, { \bf v}) +$ d$({ \bf v}, { \bf 0})\geq $ d$({
\bf u}, { \bf 0})=$ wt$({ \bf u})$. Thus the minimum weight is
given by $d_{2}^{\perp}$ and, consequently, the minimum distance
of ${[(C_{2}| C_{2} + C_{4})]}^{\perp }$ is equal to $\min \{
2d_{4}^{\perp}, d_{2}^{\perp } \}$. Applying the CSS construction
to the codes $(C_{2}| C_{2} + C_{4})$, $(C_{1}| C_{1} + C_{3})$
and ${[(C_{2}| C_{2} + C_{4})]}^{\perp }$, one obtains an $[[2n,
(k_{1}+k_3)-(k_{2}+k_{4}), d_{z} / d_{x}]{]}_{q}={[[2n,
k^{*}+k^{\diamond}, d_{z}/ d_{x}]]}_{q}$ asymmetric stabilizer
code, where $d_{z}\geq \min \{ 2d_{1}, d_3 \}$ and $d_{x}\geq \min
\{ 2d_{4}^{\perp}, d_{2}^{\perp } \}$, as required.

As an alternative proof (suggested by the referee), we also can
write the codewords of ${[(C_{2}| C_{2} + C_{4})]}^{\perp }$ in
the form $\{ ({ \bf u}+{ \bf v}, -{ \bf v}) | { \bf u} \in
C_{2}^{\perp} , { \bf v} \in C_{4}^{\perp} \}$, and because the
Hamming weights of ${ \bf v}$ and $-{ \bf v}$ are the same, the
latter code is equivalent to $\{ ({ \bf u}+{ \bf v}, { \bf v}) | {
\bf u} \in C_{2}^{\perp} , { \bf v} \in C_{4}^{\perp} \}$, and the
result follows.
\end{proof}

\section{Code Constructions}\label{sec5}

In this section we utilize the construction methods developed in
Section~\ref{sec4} to obtain new families of AQECC. In order to
shorten the length of this paper we only apply the quantum code
expansion shown in Subsection~\ref{sub4.1} of Section~\ref{sec4},
although it is clear that all construction methods proposed in
Section~\ref{sec4} can also be applied. In
Subsections~\ref{subsec5.1}, \ref{subsec5.2}, \ref{subsec5.3},
\ref{subsec5.4} and \ref{subsec5.5} we construct AQECC derived
from generalized Reed-Muller (GRM), character codes, BCH,
quadratic residue (QR) and affine-invariant codes, respectively.
In Subsection~\ref{subsec5.6}, we construct a code table
containing the parameters of known AQECC as well the parameters of
the new codes.

\begin{remark}
It is important to observe that in all results presented in the
following, we expand the codes defined over ${\mathbb F}_{q}$
(where $q=p^{t}$, $t\geq 1$ and $p$ prime) with respect to the
prime field ${\mathbb F}_{p}$. However, the method also holds if
one expands such a codes over any subfield of the field ${\mathbb
F}_{q}$.
\end{remark}

\subsection{Construction I- Generalized Reed-Muller Codes}\label{subsec5.1}

The first family of AQECC derived from binary Reed-Muller (RM)
codes were constructed in \cite[Lemma 4.1]{Sarvepalli:2009}. In
this subsection we present a construction of AQECC derived from
generalized Reed-Muller (GRM)
\cite{Macwilliams:1977,Peterson:1972}.

The GRM code ${\cal{R}}_{q}(\alpha , m)$ over ${\mathbb F}_{q}$ of
order $\alpha$, $0\leq \alpha < q(m-1)$, has parameters ${[q^{m},
k(\alpha), d(\alpha)]}_{q}$, where
\begin{eqnarray}\label{RMpar1}
k(\alpha)=\displaystyle\sum_{i=0}^{m}
{(-1)}^{i}\left(
\begin{array}{c}
m\\
i\\
\end{array}
\right)
\left(
\begin{array}{c}
m+\alpha -iq\\
\alpha -iq\\
\end{array}
\right)
\end{eqnarray}
and
\begin{eqnarray}\label{RMpar2}
d(\alpha)=(t+1)q^{u},
\end{eqnarray}
where $m(q-1)-\alpha =(q-1)u+t$ and $0\leq t < q-1$. The dual of a GRM code
${\cal{R}}_{q}(\alpha , m)$ is also a GRM code given by
$[{\cal{R}}_{q}(\alpha , m)]^{\perp } = {\cal{R}}_{q}({\alpha}^{\perp } , m)$, where
${\alpha}^{\perp } = m(q-1)-1 -\alpha$.

We use the properties of the GRM codes in order to deriving new
asymmetric quantum codes:

\begin{theorem}\label{mainGRM}
Let $0\leq {\alpha}_1 \leq {\alpha}_{2}< m(q-1)$ and assume that
$q=p^{t}$ is a prime power, where $t\geq 1$. Then there exists an
$p$-ary asymmetric quantum GRM code with parameters ${[[tq^{m}, \
t[k({\alpha}_{2})-k({\alpha}_{1})], \ d_{z}/d_{x}]]}_{p}$, where
$d_{z}\geq d({\alpha}_{2})$, $d_{x}\geq d({\alpha}_{1}^{\perp})$,
$k({\alpha}_{2})$ and $k({\alpha}_{1})$ are given in
Eq.~(\ref{RMpar1}), $d({\alpha}_{2})$ is given in
Eq.~(\ref{RMpar2}) and $d({\alpha}_{1}^{\perp})=(a+1)q^{b}$, where
${\alpha}_{1}+1 =(q-1)b+a$ and $0\leq a\leq q-1$.
\end{theorem}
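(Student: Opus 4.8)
The plan is to combine two classical facts about generalized Reed--Muller codes with the quantum code expansion established in Theorem~\ref{MAINI}. First I would recall that GRM codes are nested: whenever $0\leq \alpha_1 \leq \alpha_2 < m(q-1)$, one has $\mathcal{R}_q(\alpha_1,m) \subset \mathcal{R}_q(\alpha_2,m)$, simply because the defining monomial evaluations of lower order form a subset of those of higher order. Thus setting $C_1 = \mathcal{R}_q(\alpha_2,m)$ and $C_2 = \mathcal{R}_q(\alpha_1,m)$ gives a nested pair of $[q^m, k(\alpha_i), d(\alpha_i)]_q$ codes over $\mathbb{F}_q$ with $C_2 \subset C_1$, and hence by the CSS construction (Lemma~\ref{CSS1}) an AQECC with parameters $[[q^m,\ k(\alpha_2)-k(\alpha_1),\ d_z/d_x]]_q$, where $d_z \geq d(\alpha_2)$ and $d_x \geq d(C_2^\perp)$.

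Next I would identify $d(C_2^\perp)$ explicitly. Since $[\mathcal{R}_q(\alpha_1,m)]^\perp = \mathcal{R}_q(\alpha_1^\perp,m)$ with $\alpha_1^\perp = m(q-1)-1-\alpha_1$, the minimum distance of the dual is $d(\alpha_1^\perp)$, which is computed from formula (\ref{RMpar2}) applied to the order $\alpha_1^\perp$. Writing $m(q-1) - \alpha_1^\perp = m(q-1) - (m(q-1)-1-\alpha_1) = \alpha_1 + 1$, and decomposing $\alpha_1 + 1 = (q-1)b + a$ with $0 \leq a < q-1$ (the statement writes $0\leq a\leq q-1$; one checks the boundary case $a=q-1$ is absorbed by adjusting $b$, or is simply not attained in the relevant range), the distance formula yields $d(\alpha_1^\perp) = (a+1)q^b$. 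This is exactly the quantity appearing in the theorem.

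Finally I would invoke Theorem~\ref{MAINI} with $q$ replaced by $p$ and $m$ replaced by $t$ (recall $q = p^t$): expanding the $q^m$-length code over $\mathbb{F}_q = \mathbb{F}_{p^t}$ down to $\mathbb{F}_p$ multiplies the length by $t$ and the dimension by $t$, producing an AQECC with parameters $[[tq^m,\ t(k(\alpha_2)-k(\alpha_1)),\ d_z^*/d_x^*]]_p$ with $d_z^* \geq d(\alpha_2)$ and $d_x^* \geq d(\alpha_1^\perp)$, since expansion never decreases minimum distance and, by Lemma~\ref{dualcode}, the dual of the expansion is the expansion of the dual. Substituting the explicit value of $d(\alpha_1^\perp)$ gives the claimed bounds.

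The only genuine subtlety is the bookkeeping around the dual-distance formula: one must be careful that the parameter decomposition $\alpha_1+1 = (q-1)b+a$ is consistent with the constraint $0\leq a < q-1$ inherited from (\ref{RMpar2}), and that the hypothesis $0\leq \alpha_1 \leq \alpha_2 < m(q-1)$ keeps $\alpha_1^\perp$ in the valid range $0\leq \alpha_1^\perp < q(m-1)$ for the GRM dual identity to apply; everything else is a direct concatenation of Lemma~\ref{CSS1}, the GRM duality relation, and Theorem~\ref{MAINI}, so I would keep the write-up short.
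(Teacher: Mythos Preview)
Your proposal is correct and follows essentially the same route as the paper: use the nesting $\mathcal{R}_q(\alpha_1,m)\subset\mathcal{R}_q(\alpha_2,m)$, identify the dual of $\mathcal{R}_q(\alpha_1,m)$ as $\mathcal{R}_q(\alpha_1^{\perp},m)$ and compute its distance via the substitution $m(q-1)-\alpha_1^{\perp}=\alpha_1+1$, then invoke Theorem~\ref{MAINI} to pass from $\mathbb{F}_q$ to $\mathbb{F}_p$. The only cosmetic difference is that the paper phrases the expansion step by describing the parameters of the expanded codes $\beta(\mathcal{R}_q(\alpha_i,m))$ explicitly before citing Theorem~\ref{MAINI}, whereas you first build the $q$-ary CSS code and then expand; the content is identical.
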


\begin{proof}
First, note that since the inequality ${\alpha}_1 \leq
{\alpha}_{2}$ holds then the inclusion ${\cal{R}}_{q}({\alpha}_{1}
, m) \subset {\cal{R}}_{q}({\alpha}_{2}, m)$ also holds. The codes
$\beta ({\cal{R}}_{q}({\alpha}_{1}, m))$ and $\beta
({\cal{R}}_{q}({\alpha}_{2}, m))$ have parameters ${[tq^{m},
tk({\alpha}_{1}), d({\alpha}_{1})]}_{p}$ and ${[tq^{m},
tk({\alpha}_{2}), d({\alpha}_{2})]}_{p}$, respectively, where
$k({\alpha}_{1})$ and $k({\alpha}_{2})$ are computed according to
Eq. (\ref{RMpar1}) and $d({\alpha}_{1})$, $d({\alpha}_{2})$ are
computed by applying Eq. (\ref{RMpar2}). We know that the
parameter ${\alpha}_{1}^{\perp }$ of the dual code
$[{\cal{R}}_{q}({\alpha}_{1}, m)]^{\perp }$ $={\cal{R}}_{q}
({\alpha}_{1}^{\perp}, m)$ equals ${\alpha}_{1}^{\perp
}=m(q-1)-1-{\alpha}_{1}$, so the minimum distance of
$[{\cal{R}}_{q}({\alpha}_{1} , m)]^{\perp }$ is equal to
$d({\alpha}_{1}^{\perp})=(a+1)q^{b}$, where ${\alpha}_{1}+1
=(q-1)b+a$ and $0\leq a\leq q-1$. Thus the code
${[\beta({\cal{R}}_{q}({\alpha}_{1} , m))]}^{\perp}$ has minimum
distance greater than or equal to $d({\alpha}_{1}^{\perp})$.
Applying Theorem~\ref{MAINI} one can get an ${[[tq^{m}, \
t[k({\alpha}_{2})-k({\alpha}_{1})], \ d_{z}/d_{x}]]}_{p}$
asymmetric stabilizer code, where $d_{z}\geq d({\alpha}_{2})$ and
$d_{x}\geq d({\alpha}_{1}^{\perp})$.
\end{proof}

\subsection{Construction II- Character Codes}\label{subsec5.2}

The class of (classical) character codes were introduced by Ding
\emph{et al.} \cite{Ding:2000}. Let us consider the commutative
group $G={\mathbb{Z}}_{2}^{m}$, $m\geq 1$ and a finite field
${\mathbb F}_{q}$ of odd characteristic. Recall that the code
$C_{q}(r, m)=C_{X}$, where $X\subset {\mathbb{Z}}_{2}^{m}$
consists of elements with Hamming weight greater than $r$ has
parameters $[2^{m},$ $s_{m}(r), 2^{m-r}{]}_{q}$ (see \cite[Theorem
6]{Ding:2000}), where $s_{m}(r)= \displaystyle\sum_{i=0}^{r}\left(
\begin{array}{c}
m\\
i\\
\end{array}
\right)$. The (Euclidean) dual code ${[C_{q}(r, m)]}^{\perp}$ of
$C_{q}(r, m)$ is equivalent to $C_{q}(m-r-1, m)$ (see
\cite[Theorem 8]{Ding:2000}) and consequently has parameters
$[2^{m}, s_{m}(m-r-1),$ $2^{r+1}{]}_{q}$.

Next we utilize the code expansion applied to character codes to
generate new AQECC, as established in the following theorem:

\begin{theorem}\label{lagchar}
If $0\leq r_1 < r_{2} \leq  m$ and $q=p^{t}$ is a power of an odd prime $p$,
where $t\geq 1$, then there exists an ${[[t2^{m}, t[k(r_{2}) - k(r_1)],
d_{z}/d_{x}]]}_{p}$ AQECC, where $k(r)=\displaystyle\sum_{i=0}^{r}\left(
\begin{array}{c}
m\\
i\\
\end{array}
\right)$
and $d_{z}\geq 2^{m-r_{2}}$ and $d_{x}\geq  2^{r_1 + 1}$.
\end{theorem}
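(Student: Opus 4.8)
The plan is to mirror the proof of Theorem~\ref{mainGRM}, replacing the GRM codes by the character codes $C_q(r,m)$ and invoking Theorem~\ref{MAINI}. First I would record the nesting: since $r_1 < r_2$, the set $X_{2}\subset {\mathbb Z}_2^m$ of elements of weight greater than $r_2$ is contained in the set $X_{1}$ of elements of weight greater than $r_1$, so $C_q(r_2,m)=C_{X_2}\subset C_{X_1}=C_q(r_1,m)$. Write $C_1=C_q(r_2,m)$ and $C_2=C_q(r_1,m)$ so that $C_1\subset C_2$; then by \cite[Theorem 6]{Ding:2000} these have parameters $[2^m,\,k(r_2),\,2^{m-r_2}]_q$ and $[2^m,\,k(r_1),\,2^{m-r_1}]_q$ with $k(r)=\sum_{i=0}^r\binom{m}{i}$. (Note the roles are swapped relative to the CSS labeling in Lemma~\ref{CSS1}: here the \emph{larger} code is $C_q(r_2,m)$, because larger order $r$ gives a \emph{smaller} defining set and hence fewer codewords; I would state this carefully to avoid an off-by-one confusion.)

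Next I would compute the minimum distance of the relevant dual. In the notation of Theorem~\ref{MAINI} the $d_x$-bound comes from the dual of the \emph{smaller} constituent code, which here is $C_q(r_1,m)$; by \cite[Theorem 8]{Ding:2000} its Euclidean dual $[C_q(r_1,m)]^{\perp}$ is equivalent to $C_q(m-r_1-1,m)$, whose minimum distance is $2^{m-(m-r_1-1)}=2^{r_1+1}$. Equivalence preserves Hamming weight, so $d([C_q(r_1,m)]^{\perp})=2^{r_1+1}$.

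Then I would apply Theorem~\ref{MAINI} with $q$ replaced by $p$ and $m$ (the extension degree) replaced by $t$, since $q=p^t$ and we expand over the prime field ${\mathbb F}_p$. The input is the CSS-type asymmetric code $[[2^m,\,k(r_2)-k(r_1),\,d_z/d_x]]_q$ built from the nested pair $C_q(r_1,m)\supset C_q(r_2,m)$, with $d_z\geq d(C_q(r_2,m))=2^{m-r_2}$ and $d_x\geq d([C_q(r_1,m)]^{\perp})=2^{r_1+1}$. Theorem~\ref{MAINI} then yields a $p$-ary code $[[t2^m,\,t(k(r_2)-k(r_1)),\,d_z/d_x]]_p$ with $d_z\geq 2^{m-r_2}$ and $d_x\geq 2^{r_1+1}$, which is exactly the claimed statement. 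I expect no serious obstacle: the only thing requiring care is the bookkeeping of which code plays the role of $C_1$ versus $C_2$ in Theorem~\ref{MAINI} and consequently which dual supplies the $d_x$ estimate — once that is pinned down, everything is a direct substitution into Theorem~\ref{MAINI}.
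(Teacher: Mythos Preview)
Your overall strategy---nest two character codes and feed the pair to Theorem~\ref{MAINI}---is exactly what the paper does. But the nesting you write down is backward, and this makes the argument internally inconsistent.

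You claim that $X_{2}\subset X_{1}$ implies $C_{X_{2}}\subset C_{X_{1}}$, hence $C_{q}(r_{2},m)\subset C_{q}(r_{1},m)$. This is false: in the Ding--Kohel--Ling setup $X$ is a parity-check (defining) set, so $\dim C_{X}=2^{m}-|X|$, and a \emph{smaller} $X$ gives a \emph{larger} code. You can see this directly from the dimensions you yourself quote: $k(r_{1})<k(r_{2})$, so $C_{q}(r_{1},m)$ is the smaller code, and the correct inclusion is $C_{q}(r_{1},m)\subset C_{q}(r_{2},m)$, as the paper states. Your parenthetical remark (``smaller defining set and hence fewer codewords'') is the slip: a smaller defining set means fewer constraints, hence \emph{more} codewords. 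With your stated nesting $C_{q}(r_{1},m)\supset C_{q}(r_{2},m)$, Theorem~\ref{MAINI} would give $d_{z}\ge d(C_{q}(r_{1},m))=2^{m-r_{1}}$ and $d_{x}\ge d([C_{q}(r_{2},m)]^{\perp})=2^{r_{2}+1}$, which is not the asserted bound; the bounds you write at the end correspond to the \emph{correct} nesting, not the one you argued for.

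Fix this one point---take $C_{2}=C_{q}(r_{1},m)\subset C_{1}=C_{q}(r_{2},m)$---and the rest of your proof (dual of $C_{q}(r_{1},m)$ equivalent to $C_{q}(m-r_{1}-1,m)$ with minimum distance $2^{r_{1}+1}$, then apply Theorem~\ref{MAINI}) is identical to the paper's.
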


\begin{proof}
It is easy to see that $C_{q}(r_{1}, m)\subset C_{q}(r_{2}, m)$.
The dual code ${[C_{q}(r_{1}, m)]}^{\perp}$ is equivalent to the
code $C_{q}(m-r_{1}-1, m)$. Applying Theorem~\ref{MAINI} one can
get an ${[[t2^{m}, t(k(r_{2}) - k(r_1)), d_{z}/d_{x}]]}_{p}$
AQECC, where $t$, $k(r_{1})$, $k(r_{2})$, $d_{x}$ and $d_{z}$ are
specified in the hypothesis.
\end{proof}

\subsection{Construction III - BCH Codes}\label{subsec5.3}

In this subsection we construct more families of asymmetric
stabilizer codes derived from Bose-Chaudhuri-Hocquenghem (BCH)
codes \cite{Macwilliams:1977}. The first families of AQECC derived
from BCH codes were constructed by Aly \cite[Theorem
8]{Salah:2008}. Recently, the parameters of these codes were
improved for certain families of BCH codes \cite{LaGuardia:2011}.

Recall that a cyclic code of length $n$ over ${\mathbb F}_{q}$ is
a BCH code with designed distance $\delta$ if, for some integer
$b\geq 0,$ one has $$g(x)= l.c.m. \{{M}^{(b)}(x), {M}^{(b+1)}(x),
\ldots, {M}^{(b+\delta-2)}(x)\},$$ i. e., $g(x)$ is the monic
polynomial of smallest degree over ${\mathbb F}_{q}$ having
${{\alpha}^{b}}, {{\alpha}^{b+1}},$ $\ldots, {{\alpha}^{b+
\delta-2}}$ as zeros. The next result shows how to construct more
AQECC by expanding (classical) BCH codes:

\begin{theorem}\label{ABCH1}
Suppose that $n=q^{m}-1$, where $q=p^{t}$ is a power of an odd prime $p$,
$t\geq 1$ and $m\geq 3$ are integers an integer (if $q=3$, $m\geq 4$).
Then there exist quantum codes with parameters

\begin{itemize}
\item  ${[[tn, t(n - m(4q-5) - 2), d_{z}\geq (2q+2) / d_{x}\geq 2q]]}_{p};$

\item $[[tn, t(n - m(4q-c-5) - 2), d_{z}\geq (2q+2) / d_{x}\geq$ $(2q-c)]{]}_{p},$
where $0\leq c\leq q-2$;

\item  ${[[tn, t(n -m(2c-l-4)-2), d_{z}\geq c / d_{x}\geq (c-l)]]}_{p},$ \\
where $2\leq c \leq q$ and $0\leq l\leq c-2$;

\item  ${[[tn, t(n -m(2c-l-6)-2), d_{z}\geq c / d_{x}\geq (c-l)]]}_{p},$ \\
where $q+2 < c \leq 2q$ and $0\leq l\leq c-q-3$;

\item  $[[tn, t(n -m(4q-l-5)-1), d_{z}\geq (2q+1) / d_{x}\geq$ $(2q-l)]{]}_{p},$
where $0\leq l\leq q-2$.
\end{itemize}
\end{theorem}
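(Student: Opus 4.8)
The plan is to obtain each of the five families as the $\mathbb{F}_{p}$-expansion, via Theorem~\ref{MAINI}, of a $q$-ary asymmetric stabilizer code produced by the CSS construction (Lemma~\ref{CSS1}) from a suitable nested pair of BCH codes $C_{2}\subset C_{1}$ of length $n=q^{m}-1$ over $\mathbb{F}_{q}$. Recall that for cyclic codes the inclusion $C_{2}\subseteq C_{1}$ is equivalent to the reverse inclusion $T_{1}\subseteq T_{2}$ of defining sets, so the whole problem reduces to bookkeeping with $q$-cyclotomic cosets modulo $n$.

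First I would fix, for each family, the designed distances of $C_{1}$ and of $C_{2}$ so that the BCH bound forces the claimed distances. For the first family, take $C_{1}$ narrow-sense with designed distance $2q+2$, giving $d(C_{1})\geq 2q+2$, and choose $C_{2}\subseteq C_{1}$ whose Euclidean dual $C_{2}^{\perp}$ has a defining set containing $2q-1$ consecutive residues, giving $d(C_{2}^{\perp})\geq 2q$; the remaining families use designed distance $c$ (resp. $2q+1$) for $C_{1}$ and dual designed distance $c-l$ (resp. $2q-l$) for $C_{2}^{\perp}$, the shifts being read off from the statement. Since $C_{2}\subset C_{1}$, Lemma~\ref{CSS1} yields a $q$-ary $[[n,k_{1}-k_{2},d_{z}/d_{x}]]_{q}$ AQECC, and since the distance lower bounds produced by Theorem~\ref{MAINI} depend only on $d(C_{1})$ and $d(C_{2}^{\perp})$, the BCH bounds above suffice.

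Next I would compute the dimension difference $k_{1}-k_{2}=|T_{2}|-|T_{1}|$ with the standard coset-size lemma: for $n=q^{m}-1$ and residues in the relevant range, each $q$-cyclotomic coset $C_{i}$ with $i\not\equiv 0\pmod{q}$ has cardinality exactly $m$, distinct such cosets are disjoint, and $C_{qi}=C_{i}$ so multiples of $q$ contribute no new coset; the hypotheses $m\geq 3$ (and $m\geq 4$ when $q=3$) are precisely what guarantees full coset size here. Counting the coset representatives that lie in the defining interval of $C_{2}$ but not of $C_{1}$ produces exactly the coefficient of $m$ displayed in each family — $4q-5$, $4q-c-5$, $2c-l-4$, $2c-l-6$, $4q-l-5$ — while the residual additive constants ($-2$, resp. $-1$) come from the few boundary cosets (such as $C_{0}$ and the coset containing the largest endpoint) which either do or do not lie in the defining set or attain full size; the ranges $2\leq c\leq q$, $q+2<c\leq 2q$, $0\leq l\leq c-2$, $0\leq l\leq c-q-3$, $0\leq l\leq q-2$ are exactly the regimes in which this count is valid. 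Finally, applying Theorem~\ref{MAINI} with prime field $\mathbb{F}_{p}$ and extension degree $t$, so that $q=p^{t}$, multiplies length and dimension by $t$ and preserves the distance lower bounds, yielding the stated $p$-ary parameters ${[[tn,t(k_{1}-k_{2}),d_{z}/d_{x}]]}_{p}$.

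The main obstacle is the dimension count: one must identify exactly which cyclotomic cosets modulo $q^{m}-1$ fall inside the (possibly shifted) defining intervals of $C_{1}$ and $C_{2}$, verify under the stated hypotheses on $m$ and $q$ that none of them collapse or fail to have cardinality $m$, and account correctly for the handful of boundary cosets that yield the $-1$ and $-2$ corrections — this is where all the case analysis on $c$ and $l$ actually lives. A secondary but still delicate point is verifying, in the second step, that the defining set of $C_{2}^{\perp}=\mathbb{Z}_{n}\setminus(-T_{2}\bmod n)$ genuinely contains a run of consecutive residues long enough to invoke the BCH bound for $d_{x}$, since the length of that run depends sensitively on the exact endpoints of $T_{2}$.
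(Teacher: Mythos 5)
Your overall route is the same as the paper's: produce a nested pair of BCH codes $C_{2}\subset C_{1}$ of length $n=q^{m}-1$ over $\mathbb{F}_{q}$, feed them to the CSS construction, and then expand over the prime field via Theorem~\ref{MAINI} to multiply length and dimension by $t$ while preserving the lower bounds on $d_{z}$ and $d_{x}$. The difference is that the paper does not reconstruct the $q$-ary codes at all: its entire proof is a citation of the nested cyclic pairs already established in \cite[Theorems 4 and 5 and Corollary 1]{LaGuardia:2011} (whose parameters are exactly the bracketed $q$-ary versions listed in Table~\ref{table1}), followed by a one-line application of Theorem~\ref{MAINI}. You instead propose to rebuild those classical ingredients from scratch by cyclotomic-coset bookkeeping.

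That is where the gap lies. The five dimension formulas --- the coefficients $4q-5$, $4q-c-5$, $2c-l-4$, $2c-l-6$, $4q-l-5$ of $m$, the additive corrections $-2$ versus $-1$, and the admissible ranges of $c$ and $l$ --- are precisely the content of the theorem, and your proposal asserts that they ``come out'' of a coset count that you explicitly defer (``this is where all the case analysis on $c$ and $l$ actually lives''). Nothing in the write-up verifies that the defining sets you would choose have the claimed cardinalities, that no cosets collapse under the hypotheses $m\geq 3$ (or $m\geq 4$ for $q=3$), or that the defining set of $C_{2}^{\perp}$ really contains the consecutive run needed for the BCH bound on $d_{x}$ --- a point you yourself flag as delicate. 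As a blind proof this is therefore incomplete: either the count must be carried out case by case, or one must do what the paper does and invoke \cite{LaGuardia:2011}, where that work has already been done. The final expansion step via Theorem~\ref{MAINI} is handled correctly and matches the paper.
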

\begin{proof}
Consider the codes constructed in \cite[Theorems 4 and 5 and
Corollary 1]{LaGuardia:2011}. These codes are derived from two
distinct nested cyclic codes $C_{2}\subset C_1$. Thus, applying
Theorem~\ref{MAINI} the result holds.
\end{proof}

\begin{theorem}
Let $q=p^{t}$ be a power of a prime $p$, $t\geq 1$, $\gcd (q,
n)=1$ and ${{ord}_{n}}(q) = m$. Let $C_1$ and $C_2$ be two
narrow-sense BCH codes of length $q^{\lfloor m/2\rfloor}< n \leq
q^{m}-1$ over ${\mathbb F}^{q}$ with designed distances
${\delta}_{1}$ and ${\delta}_{2}$ in the range $2 \leq
{\delta}_{1}, {\delta}_{2} \leq {\delta}_{max}=\min \{ \lfloor
nq^{\lceil m/2\rceil}/(q^{m}-1)\rfloor , n \}$ and ${\delta}_{1}<
{\delta}_{2}^{\perp}\leq {\delta}_{2}< {\delta}_{1}^{\perp}$.
Assume also that $S_1 \cup\ldots \cup S_{{\delta}_{1}-1}\neq S_1
\cup\ldots \cup S_{{\delta}_{2}-1}$, where $S_{i}$ denotes a
cyclotomic coset. Then there exists an AQECC with parameters
$[[tn, t(n -m\lceil({\delta}_1 -1)(1-1/q) \rceil
-m\lceil({\delta}_{2}-1) (1-1/q)\rceil),$
$d_{z}^{*}/d_{x}^{*}]{]}_{p},$ where $d_{z}^{*}=wt(C_{2}\backslash
C_1^{\perp}) \geq {\delta}_{2}$ and $d_{x}^{*} = wt(C_1\backslash
C_2^{\perp})\geq {\delta}_1$.
\end{theorem}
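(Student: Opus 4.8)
The strategy is to recognize that the theorem is really a corollary of two things already in hand: the quantum BCH construction from the literature giving a CSS-type asymmetric code over $\mathbb{F}_q$, together with Theorem~\ref{MAINI} (code expansion). First I would invoke the known construction of asymmetric quantum BCH codes over $\mathbb{F}_q$: under the stated hypotheses on $n$, $m$, $\delta_1$, $\delta_2$ and the cyclotomic-coset condition $S_1\cup\cdots\cup S_{\delta_1-1}\neq S_1\cup\cdots\cup S_{\delta_2-1}$, one has two narrow-sense BCH codes $C_1, C_2$ over $\mathbb{F}_q$ with $C_1^{\perp}\subset C_2$ (equivalently a nested pair after dualizing), whose dimensions are controlled by the standard BCH dimension bound $\dim C_i \geq n - m\lceil(\delta_i-1)(1-1/q)\rceil$, valid precisely in the range $\delta_i\leq\delta_{\max}$. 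The condition $\delta_1<\delta_2^{\perp}\leq\delta_2<\delta_1^{\perp}$ is exactly what guarantees the required containment of the dual pair, so that the CSS-type construction applies and produces an $[[n, k, d_z/d_x]]_q$ asymmetric code with $k = n - m\lceil(\delta_1-1)(1-1/q)\rceil - m\lceil(\delta_2-1)(1-1/q)\rceil$ (up to the usual subtraction-of-two-redundancies bookkeeping), $d_z = \mathrm{wt}(C_2\setminus C_1^{\perp})\geq\delta_2$ and $d_x = \mathrm{wt}(C_1\setminus C_2^{\perp})\geq\delta_1$.

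**Applying expansion.** Once that $\mathbb{F}_q$-level asymmetric code is in place, I would apply Theorem~\ref{MAINI} with $q^m$ there replaced by our $q = p^t$ (expanding over the prime field $\mathbb{F}_p$, as in the Remark at the start of Section~\ref{sec5}). Theorem~\ref{MAINI} then immediately yields a $p$-ary asymmetric quantum code of length $tn$, dimension $t\,k$ with $k$ as above, and minimum distances $d_z^*\geq d_z\geq\delta_2$, $d_x^*\geq d_x\geq\delta_1$. Writing out $t\,k = t\big(n - m\lceil(\delta_1-1)(1-1/q)\rceil - m\lceil(\delta_2-1)(1-1/q)\rceil\big)$ gives exactly the claimed parameters.

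**Main obstacle.** The routine parts — the expansion step and the Hamming-weight/distance bounds — are immediate. The real content, and the step I would be most careful about, is verifying that the hypotheses genuinely produce a valid nested pair of classical codes to feed into the CSS-type construction: namely that $\delta_1<\delta_2^{\perp}$ and $\delta_2<\delta_1^{\perp}$ force the containment $C_1^{\perp}\subset C_2$ (and $C_2^{\perp}\subset C_1$) needed for the asymmetric CSS setup, and that the cyclotomic-coset inequality $S_1\cup\cdots\cup S_{\delta_1-1}\neq S_1\cup\cdots\cup S_{\delta_2-1}$ is what makes the inclusion strict so the quantum code is nontrivial. This is precisely the analysis carried out in the companion BCH papers (\cite{LaGuardia:2011,Salah:2008}) and in the asymmetric-BCH literature, so in the write-up I would cite the relevant theorem there for the existence of the $q$-ary asymmetric BCH code with these parameters, and then the proof reduces to one line: apply Theorem~\ref{MAINI}. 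I would also note that the constraint $q^{\lfloor m/2\rfloor}<n$ together with $\delta_i\leq\delta_{\max}=\min\{\lfloor nq^{\lceil m/2\rceil}/(q^m-1)\rfloor, n\}$ is exactly the regime in which the BCH dimension formula is tight and the dual distances $\delta_i^{\perp}$ behave as expected, so these should be flagged as the hypotheses doing the work.
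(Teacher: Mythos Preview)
Your proposal is correct and follows essentially the same approach as the paper: invoke the existing asymmetric quantum BCH construction over $\mathbb{F}_q$ from \cite[Theorem~8]{Salah:2008} (which is precisely what the hypotheses on $n$, $m$, $\delta_1$, $\delta_2$, $\delta_{\max}$, and the cyclotomic-coset condition are tailored for), and then apply Theorem~\ref{MAINI} to expand down to $\mathbb{F}_p$. The paper's own proof is literally one sentence to this effect; your additional commentary on what each hypothesis buys is helpful context but not logically required.
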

\begin{proof}
It suffices to apply Theorem~\ref{MAINI} in those codes shown in
\cite[Theorem 8]{Salah:2008}.
\end{proof}

\begin{remark}
Note that one can obtain more families of AQECC by applying
Theorem~\ref{MAINI} in the existing families shown in
\cite{LaGuardia:2009}. Moreover, expanding generalized
Reed-Solomon (GRS) codes, one obtains \cite[Theorem
7.1]{LaGuardia:2012} as a particular case of Theorem~\ref{MAINI}.
\end{remark}

\subsection{Construction IV- Quadratic Residue Codes}\label{subsec5.4}

In this subsection we construct families of AQECC derived from
quadratic residue (QR) codes \cite{Macwilliams:1977,Huffman:2003}.
A family of quantum codes derived from classical QR codes was
constructed in \cite[Theorems 40 and 41]{Ketkar:2006}.

Let $p$ be an odd prime not dividing $q$, where $q$ is a prime
power that is a square modulo $p$. Let $Q$ be the set of nonzero
squares modulo $p$ and $C$ consisting of non-squares modulo $p$.
The quadratic residue codes $\mathcal{Q}$,
${\mathcal{Q}}^{\diamond }$, $\mathcal{C}$ and
${\mathcal{C}}^{\diamond}$ are cyclic codes with generator
polynomials $ q(x), ~~ (x - 1)q(x), ~~ c(x), ~~  (x - 1)c(x)$,
respectively, where $q(x) = \displaystyle \prod_{r\in Q}(x -
{\alpha}^{r}), \ \ \ c(x) = \displaystyle \prod_{s\in C}(x -
{\alpha}^{s})$ have coefficients from ${\mathbb F}_{q}$, and
$\alpha$ is a primitive $p$th root of unity belonging to some
extension field of ${\mathbb F}_{q}$. The codes $\mathcal{Q}$ and
$\mathcal{C}$ have the same parameters ${[p, (p+1)/2, d_1]}_{q}$,
where ${(d_{1})}^{2}\geq p$; similarly, the codes
${\mathcal{Q}}^{\diamond}$ and ${\mathcal{C}}^{\diamond}$ also
have the same parameters ${[ p, (p-1)/2, d_{2} ]}_{q}$, where
${(d_{2})}^{2}\geq p$.

Now we construct families of AQECC by expanding quadratic residue
codes:

\begin{theorem}\label{qrexp1} Let $p$ be a prime of the form
$p\equiv 1\mod 4$, and let $q=p_{*}^{t}$ ($t\geq 1$) be a power of
a prime that is not divisible by $p$. If $q$ is a quadratic
residue modulo $p$, then there exists an ${[[tp, t,
d_{z}/d_{x}]]}_{p_{*}}$ asymmetric quantum code, where $d_{z}$ and
$d_{x}$ satisfy $d_{z}\geq \sqrt{p}$ and $d_{x}\geq \sqrt{p}$.
\end{theorem}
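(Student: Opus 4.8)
The plan is to follow the blueprint established by the other constructions in Section~\ref{sec5}, namely to exhibit a pair of nested classical quadratic residue codes over $\mathbb{F}_q$, compute the minimum distance of the relevant dual, and then invoke Theorem~\ref{MAINI} to expand down to the prime field $\mathbb{F}_{p_{*}}$. Since $p\equiv 1\bmod 4$, the element $-1$ is a quadratic residue modulo $p$, and it is a classical fact (see \cite[Theorems 40 and 41]{Ketkar:2006} and \cite{Huffman:2003}) that in this case the quadratic residue code $\mathcal{Q}$ (or $\mathcal{C}$) is self-orthogonal-like in the sense that $\mathcal{Q}^{\diamond}\subset\mathcal{Q}$ with $(\mathcal{Q}^{\diamond})^{\perp}=\mathcal{Q}$. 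Concretely, I would take $C_1=\mathcal{Q}$ with parameters $[p,(p+1)/2,d_1]_q$ and $C_2=\mathcal{Q}^{\diamond}$ with parameters $[p,(p-1)/2,d_2]_q$, both satisfying the square-root bound $(d_i)^2\ge p$.

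First I would verify the inclusion $C_2=\mathcal{Q}^{\diamond}\subset\mathcal{Q}=C_1$, which is immediate from the generator polynomials since $q(x)\mid (x-1)q(x)$. This yields an AQECC over $\mathbb{F}_{q}$ of dimension $k_1-k_2=(p+1)/2-(p-1)/2=1$ via the CSS construction of Lemma~\ref{CSS1}, with $d_z\ge d(C_1)=d_1\ge\sqrt p$. Next, for the $d_x$ side, I would identify the dual $C_2^{\perp}=(\mathcal{Q}^{\diamond})^{\perp}$. Using the $p\equiv 1\bmod 4$ hypothesis (so that $-1\in Q$, which makes the non-square coset $C$ equal to $-Q$ and forces $\mathcal{Q}$ and $\mathcal{C}$ to interact so that $(\mathcal{Q}^{\diamond})^{\perp}=\mathcal{Q}$ up to equivalence), I conclude $C_2^{\perp}$ has the same parameters as $\mathcal{Q}$, hence minimum distance $d_2^{\perp}\ge\sqrt p$.

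Having both $C_1=[p,(p+1)/2,d_1]_q$ and $C_2=[p,(p-1)/2,d_2]_q$ with $C_2\subset C_1$, $d_1\ge\sqrt p$ and $d_2^{\perp}\ge\sqrt p$, I would now write $q=p_{*}^{t}$ and apply Theorem~\ref{MAINI} with $m=t$ (expanding with respect to the prime field $\mathbb{F}_{p_{*}}$, as in the Remark opening Section~\ref{sec5}). This produces a $p_{*}$-ary AQECC with parameters $[[tp,\ t(k_1-k_2),\ d_z^{*}/d_x^{*}]]_{p_{*}}=[[tp,\ t,\ d_z^{*}/d_x^{*}]]_{p_{*}}$, where $d_z^{*}\ge d_1\ge\sqrt p$ and $d_x^{*}\ge d_2^{\perp}\ge\sqrt p$, which is exactly the claimed statement.

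The main obstacle is the duality computation $(\mathcal{Q}^{\diamond})^{\perp}=\mathcal{Q}$ (up to code equivalence): this is precisely where the hypothesis $p\equiv 1\bmod 4$ is used and must be invoked carefully, because for $p\equiv 3\bmod 4$ the duality relations among $\mathcal{Q},\mathcal{C},\mathcal{Q}^{\diamond},\mathcal{C}^{\diamond}$ are different (the dual of $\mathcal{Q}^{\diamond}$ becomes $\mathcal{C}$ rather than $\mathcal{Q}$), so the square-root bound on $d_x$ would then have to be argued through $\mathcal{C}$ instead. Everything else — the nesting, the dimension count $(p+1)/2-(p-1)/2=1$, and the expansion step — is routine once the dual has been pinned down; I would cite \cite{Ketkar:2006,Huffman:2003} for the standard QR duality facts rather than re-deriving them.
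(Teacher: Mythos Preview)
Your approach is essentially the same as the paper's: the paper also takes $C_{1}=\mathcal{Q}$ and $C_{2}=\mathcal{Q}^{\diamond}$ (writing the latter as $\mathcal{C}^{\perp}$ via the identity $\mathcal{Q}^{\diamond}=\mathcal{C}^{\perp}$ valid for $p\equiv 1\bmod 4$), verifies the nesting $\mathcal{C}^{\perp}\subset\mathcal{Q}$, invokes the square-root bound, and then applies Theorem~\ref{MAINI}.

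That said, your duality bookkeeping is slightly tangled. When $p\equiv 1\bmod 4$ one has $-1\in Q$, hence $-Q=Q$ (not $-Q=C$ as your parenthetical states); the resulting relations are $\mathcal{Q}^{\perp}=\mathcal{C}^{\diamond}$ and $\mathcal{C}^{\perp}=\mathcal{Q}^{\diamond}$, so that $(\mathcal{Q}^{\diamond})^{\perp}=\mathcal{C}$ rather than $\mathcal{Q}$. Your closing remark on the $p\equiv 3\bmod 4$ case is likewise reversed: it is exactly then that $\mathcal{Q}^{\perp}=\mathcal{Q}^{\diamond}$ and $\mathcal{Q}$ contains its dual. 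None of this damages the parameters you claim, because $\mathcal{Q}$ and $\mathcal{C}$ are equivalent codes with the same minimum distance $d_{1}\ge\sqrt{p}$; but since you flagged this computation as the main obstacle, the identifications should be stated correctly.
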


\begin{proof}
Consider the codes $\mathcal{Q}$, ${\mathcal{Q}}^{\diamond }$ and
$\mathcal{C}$ given above. Since $p=4k+1$, then it is well known
that ${\mathcal{Q}}^{\diamond}={\mathcal{C}}^{\perp}$, so
${\mathcal{C}}^{\perp}\subset \mathcal{Q}$. The codes
$\mathcal{Q}$ and ${\mathcal{C}}^{\perp}$ have parameters,
respectively, given by ${[p, (p+1)/2, d_{1}]}_{q}$, with
${(d_1)}^{2}\geq p$ and ${[p, (p-1)/2, d_{2}]}_{q}$, where
${(d_{2})}^{2}\geq p$. Proceeding similarly as in the proof of
Theorem~\ref{MAINI} one can get an ${[[tp, t,
d_{z}/d_{x}]]}_{p_{*}}$ asymmetric quantum code, where $d_{z}$ and
$d_{x}$ satisfy $d_{z}\geq \sqrt{p}$ and $d_{x}\geq \sqrt{p}$.
\end{proof}

\begin{theorem}\label{qrexp2}
Let $p$ be a prime of the form $p\equiv 3 \mod 4$, and let
$q=p_{*}^{t}$ ($t\geq 1$) be a power of a prime that is not
divisible by $p$. If $q$ is a quadratic residue modulo $p$, then
there exists an ${[[tp, t, d_{z}/d_{x}]]}_{p_{*}}$ quantum code,
where $d_{z} \geq d$, $d_{x} \geq d$ and $d$ satisfies $d^{2}-d+1
\geq p$.
\end{theorem}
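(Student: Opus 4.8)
The plan is to mimic the proof of Theorem~\ref{qrexp1}, replacing the $p\equiv1\bmod 4$ case by the $p\equiv 3\bmod 4$ case and using the self-duality properties of quadratic residue codes in that regime. First I would recall the standard fact that when $p\equiv 3\bmod 4$ the quadratic residue code $\mathcal{Q}$ (respectively $\mathcal{C}$) satisfies $\mathcal{Q}^{\perp}=\mathcal{Q}^{\diamond}$ and $\mathcal{C}^{\perp}=\mathcal{C}^{\diamond}$, i.e., the extended QR codes are self-dual in an appropriate sense; in particular one obtains a nested pair $\mathcal{Q}^{\diamond}\subset\mathcal{Q}$ with $\mathcal{Q}^{\diamond}=(\mathcal{Q})^{\perp}$ up to the cyclic shift/equivalence relating $\mathcal{Q}$ and $\mathcal{C}$. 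This gives the two classical ingredients needed for the CSS-type construction of Lemma~\ref{CSS1}: the outer code $C_1=\mathcal{Q}={[p,(p+1)/2,d_1]}_q$ with $(d_1)^2\geq p$, and the inner code $C_2=\mathcal{Q}^{\diamond}={[p,(p-1)/2,d_2]}_q$, whose dual has minimum distance $d_1$ as well (since $C_2^{\perp}$ is, up to equivalence, a $\mathcal{Q}$- or $\mathcal{C}$-type code of dimension $(p+1)/2$).

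Next I would invoke the sharper square-root bound available for $p\equiv 3\bmod 4$: in this case the minimum weight $d$ of a quadratic residue code of length $p$ satisfies the improved bound $d^2-d+1\geq p$ rather than merely $d^2\geq p$ (this is the classical refinement of the square-root bound for QR codes of prime length $\equiv 3\bmod 4$; see \cite{Macwilliams:1977,Huffman:2003}). Both $C_1$ and $C_2^{\perp}$ are QR-type codes of this length and hence enjoy this bound. Then, proceeding exactly as in the proof of Theorem~\ref{MAINI}, I would expand $C_1$ and $C_2$ over ${\mathbb F}_{p_*}$ with respect to a basis $\beta$ of ${\mathbb F}_q={\mathbb F}_{p_*^t}$ over ${\mathbb F}_{p_*}$: the inclusion $\beta(C_2)\subset\beta(C_1)$ is preserved, $[\beta(C_2)]^{\perp}=\beta^{\perp}(C_2^{\perp})$ by Lemma~\ref{dualcode}, and expansion does not decrease minimum distances. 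Applying the CSS construction to $\beta(C_1)$, $\beta(C_2)$, $[\beta(C_2)]^{\perp}$ yields an ${[[tp,\ t((p+1)/2-(p-1)/2),\ d_z/d_x]]}_{p_*}={[[tp,\ t,\ d_z/d_x]]}_{p_*}$ asymmetric quantum code with $d_z\geq d(C_1)\geq d$ and $d_x\geq d(C_2^{\perp})\geq d$, where $d$ is the common QR-type minimum-weight lower bound satisfying $d^2-d+1\geq p$.

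The main obstacle, and the only genuinely delicate point, is getting the duality relations among $\mathcal{Q},\mathcal{Q}^{\diamond},\mathcal{C},\mathcal{C}^{\diamond}$ correct in the $p\equiv 3\bmod 4$ case: unlike the $p\equiv 1\bmod 4$ case where $\mathcal{Q}^{\diamond}=\mathcal{C}^{\perp}$ directly gives a cross-pair, here one has $\mathcal{Q}^{\perp}=\mathcal{Q}^{\diamond}$ (self-complementary duality), so one must argue that $C_2=\mathcal{Q}^{\diamond}$ is nested inside $C_1=\mathcal{Q}$ \emph{and} that $C_2^{\perp}=\mathcal{Q}$ is itself a QR code to which the square-root bound applies. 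This is where care with the multiplier $\mu_{-1}$ (the map $x\mapsto x^{-1}$ permuting the residues and non-residues) and with the equivalence $\mathcal{Q}\cong\mathcal{C}$ is needed; once this is pinned down, the expansion argument is a verbatim repetition of the proof of Theorem~\ref{MAINI}, so I would simply write ``proceeding similarly as in the proof of Theorem~\ref{MAINI}'' as the authors did in Theorem~\ref{qrexp1}.
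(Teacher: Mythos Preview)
Your proposal is correct and follows essentially the same approach as the paper: identify the duality relation $\mathcal{Q}^{\perp}=\mathcal{Q}^{\diamond}$ valid for $p\equiv 3\bmod 4$, use the nested pair $\mathcal{Q}^{\diamond}\subset\mathcal{Q}$ together with the refined square-root bound $d^{2}-d+1\geq p$, and then expand to ${\mathbb F}_{p_*}$. The only cosmetic difference is that the paper invokes Theorem~\ref{GenExpa} (the additive/stabilizer expansion) rather than Theorem~\ref{MAINI}; since $C_2^{\perp}=\mathcal{Q}=C_1$ here, both routes yield the same ${[[tp,t,d_z/d_x]]}_{p_*}$ code, and your worry about the ``delicate point'' is in fact trivial in this self-dual-containing case.
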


\begin{proof}
Since $p=4k-1$, the dual ${\mathcal{Q}}^{\perp}$ of $\mathcal{Q}$
equals ${\mathcal{Q}}^{\perp}= {\mathcal{Q}}^{\diamond}$, so
${\mathcal{Q}}^{\perp}\subset \mathcal{Q}$. The codes
$\mathcal{Q}$ and ${\mathcal{Q}}^{\perp}$ have parameters ${[p,
(p+1)/2, d]}_{q}$ and ${[p, (p-1)/2, d^{\diamond}\geq d]}_{q}$,
respectively, and the minimum distance is bounded by
$d^{2}-d+1\geq p$ (see for instance the proof of Theorem 40 in
\cite{Ketkar:2006}). Applying Theorem~\ref{GenExpa} one has an
${[[tp, t, d_{z}/d_{x}]]}_{p_{*}}$ code, where $d_{z}\geq d$,
$d_{x} \geq d$ and $d^{2}-d+1\geq p$.
\end{proof}

\begin{remark}
As observed by the referee, a refined statement can be made if one
considers the code ${\mathcal{Q}}^{\diamond }$ instead of
considering the code ${\mathcal{Q}}$, because
$d_{{\mathcal{Q}}^{\diamond }}=d_{{\mathcal{Q}}}+1$ (see
\cite[Chapter 16, Problem (2), p. 494]{Macwilliams:1977}).
\end{remark}

\subsection{Construction V- Affine-Invariant Codes}\label{subsec5.5}

We assume that the reader is familiar with the class of
(classical) affine-invariant codes. The structure and results on
this class of codes can be found in \cite{Huffman:2003}.

Quantum affine-invariant codes were investigated in the literature
\cite{Guenda:2009}:

\begin{lemma}\cite[Lemma 22]{Guenda:2009}\label{af-inv1}
Let $C^{e}$ be an extended maximal affine-invariant code ${[p^{m},
p^{m}-1-m/t, d]}_{p^{t}}$, then if $p > 3$ or $m > 2$ or $t\neq
1$, we have ${(C^{e})}^{\perp}\subset C^{e}$.
\end{lemma}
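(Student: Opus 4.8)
The plan is to argue entirely at the level of defining sets. Recall (see \cite{Huffman:2003}) that an affine-invariant code $C$ of length $p^m$ over $\mathbb{F}_{p^t}$ (with $t\mid m$) is determined by its defining set $T(C)\subseteq\{0,1,\dots,p^m-1\}$, which is a union of $p^t$-cyclotomic cosets modulo $p^m-1$, contains $0$, and is closed under the partial order $\preceq$ of $p$-adic digit domination; moreover $\dim C=p^m-|T(C)|$. First I would pin down the defining set of the maximal code: it must be $T_0=\{0\}\cup C_1$, where $C_1=\{\,p^{jt}\bmod(p^m-1):0\le j\le m/t-1\,\}$ is the $p^t$-cyclotomic coset of $1$, of cardinality $m/t$ (the multiplicative order of $p$ modulo $p^m-1$ being $m$). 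This matches the stated dimension $p^m-1-m/t$, and $T_0$ is automatically $\preceq$-closed because each $p^{jt}$ has a single nonzero $p$-ary digit.

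Next I would invoke the standard complementation rule for the dual of an affine-invariant code (equivalently, of an extended cyclic code): since $0\in T(C)$, the code $C^\perp$ is again affine-invariant with $T(C^\perp)=\{0\}\cup\{\,s:1\le s\le p^m-2,\ p^m-1-s\notin T(C)\,\}$. As every $p^{jt}<p^m-1$, we have $T_0\subseteq\{0,\dots,p^m-2\}$, so applying this rule to $C=C^e$ gives $T((C^e)^\perp)=\{0\}\cup\{\,s:1\le s\le p^m-2,\ p^m-1-s\notin T_0\,\}$. Since $(C^e)^\perp\subseteq C^e$ is equivalent to $T(C^e)\subseteq T((C^e)^\perp)$, and $0$ lies in both sets, the whole statement collapses to an arithmetic assertion: for all $j,k\in\{0,\dots,m/t-1\}$ one has $p^{jt}+p^{kt}\ne p^m-1$ (equivalently, $p^m-1-p^{jt}$ is never $0$ and never equals any $p^{kt}$).

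To settle this I would compare base-$p$ expansions. Put $a=jt$, $b=kt$, so $0\le a,b\le m-t\le m-1$. The integer $p^m-1$ has all $m$ of its base-$p$ digits equal to $p-1$, while $p^a+p^b$ has at most two nonzero digits: two digits equal to $1$ when $a\ne b$, or (when $a=b$) a single digit $2$ if $p\ge 3$ and a single digit $1$ if $p=2$. Hence $p^a+p^b=p^m-1$ forces $m\le 2$ when $a\ne b$ and $m=1$ when $a=b$; direct inspection of these few small configurations shows equality is possible only for $(p,m)=(2,2)$ with $a\ne b$ (which forces $t=1$) or for $(p,m)=(3,1)$ with $a=b=0$ (which again forces $t=1$). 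The two exceptional triples $(p,m,t)\in\{(2,2,1),(3,1,1)\}$ are precisely among those ruled out by the hypothesis ``$p>3$ or $m>2$ or $t\ne1$'', so under that hypothesis no collision occurs and $(C^e)^\perp\subseteq C^e$ follows.

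I expect the chief obstacle to be precision of bookkeeping rather than conceptual depth: one must get the conventions for extended cyclic / affine-invariant codes exactly right, and in particular the complementation rule for the dual's defining set, where the special roles of the coordinate indexed by $0$ and of the exponent $p^m-1$ are the usual source of off-by-one errors; one should also dispose separately of the degenerate smallest case $p^m=2$ (where $C^e=\{0\}$), which is likewise excluded by the hypothesis. After that, the remaining care is to keep the small-$m$ digit analysis airtight, so that the configurations admitting $p^{jt}+p^{kt}=p^m-1$ are seen to fall entirely among the triples forbidden by the hypothesis.
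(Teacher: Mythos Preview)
The paper does not supply its own proof of this lemma; it is simply quoted from \cite[Lemma~22]{Guenda:2009} and then used as a black box in Theorem~\ref{af-inv-lag}. So there is no in-paper argument to compare against.

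That said, your defining-set argument is correct and is exactly the standard way such a statement is established. The reduction to the arithmetic condition $p^{jt}+p^{kt}\neq p^{m}-1$ is right (including the side case $p^{m}-1-p^{jt}\neq 0$, which you handle via the degenerate $p^{m}=2$ remark), and the base-$p$ digit count cleanly isolates the two bad triples $(p,m,t)\in\{(2,2,1),(3,1,1)\}$, both excluded by the hypothesis. Two very small clean-ups: (i) in the $a=b$, $p=2$ sub-case the equation $2^{a+1}=2^{m}-1$ actually has \emph{no} solution at all (parity), so ``forces $m=1$'' is a harmless overstatement there; (ii) the hypothesis $t\mid m$, which you use to get $|C_1|=m/t$, is implicit in the lemma via the integrality of the stated dimension $p^{m}-1-m/t$, so it is worth saying this explicitly. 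Otherwise your outline is complete and would reconstruct the cited result without difficulty.
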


Applying Lemma~\ref{af-inv1} we can construct a family of AQECC
derived from affine-invariant codes:

\begin{theorem}\label{af-inv-lag}
Assume that $q=p^{t}$, $m$ is a positive integer and $n=p^{m}-1$.
If $p > 3$ or $m > 2$ or $t\neq 1$ then there exists an AQECC com
parameters ${[[tp^{m}, t(p^{m}-2 -2\frac{m}{t}), d_{z}/d_{x}
]]}_{p}$, where $d_{z}\geq d_{a}$, $d_{x}\geq d_{a}$, and $d_{a}$
is the minimum distance of an extended maximal affine-invariant
code.
\end{theorem}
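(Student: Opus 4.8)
The plan is to mimic the structure of the earlier expansion proofs (Theorems~\ref{mainGRM}, \ref{lagchar}, \ref{qrexp1}) and reduce everything to an application of Theorem~\ref{GenExpa} (or equivalently Theorem~\ref{MAINI}). First I would invoke Lemma~\ref{af-inv1}: under the hypothesis $p>3$ or $m>2$ or $t\neq 1$, the extended maximal affine-invariant code $C^{e}={[p^{m}, p^{m}-1-m/t, d_{a}]}_{p^{t}}$ satisfies $(C^{e})^{\perp}\subset C^{e}$. So we may set $C_{1}=C^{e}$ and $C_{2}=(C^{e})^{\perp}$, giving a nested pair $C_{2}\subset C_{1}$ over ${\mathbb F}_{q}$ with $q=p^{t}$.

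Next I would compute the parameters of the associated CSS-type asymmetric quantum code over ${\mathbb F}_{q}$ before expansion. Since $C_{1}=C^{e}$ has dimension $k_{1}=p^{m}-1-\frac{m}{t}$, its dual $C_{2}=(C^{e})^{\perp}$ has dimension $k_{2}=p^{m}-k_{1}=1+\frac{m}{t}$, so $k=k_{1}-k_{2}=p^{m}-2-\frac{2m}{t}$. Applying the CSS construction (Lemma~\ref{CSS1}) to $C_{2}\subset C_{1}$ yields a $[[p^{m}, p^{m}-2-\frac{2m}{t}, d_{z}/d_{x}]]_{q}$ code with $d_{z}\geq d(C_{1})=d_{a}$ and $d_{x}\geq d(C_{2}^{\perp})=d((C^{e})^{\perp\perp})=d(C^{e})=d_{a}$; the point is that because $C_{2}=(C^{e})^{\perp}$, its dual is $C^{e}$ itself, so both distance bounds are governed by the single quantity $d_{a}$, the minimum distance of the extended maximal affine-invariant code. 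Then I would apply Theorem~\ref{GenExpa} (expansion of a stabilizer code over ${\mathbb F}_{q^{m'}}$ down to ${\mathbb F}_{p}$, here with the role of the extension degree played by $t$ since $q=p^{t}$): this multiplies the length by $t$ and the $k$ by $t$, and preserves the distance lower bounds, producing a $[[tp^{m}, t(p^{m}-2-\frac{2m}{t}), d_{z}/d_{x}]]_{p}$ code with $d_{z}\geq d_{a}$ and $d_{x}\geq d_{a}$, which is exactly the claimed statement. Alternatively one could invoke Theorem~\ref{MAINI} directly with $\beta$ a basis of ${\mathbb F}_{q}$ over ${\mathbb F}_{p}$ and use Lemma~\ref{dualcode} to handle the dual of the expanded code, as was done in Theorem~\ref{mainGRM}.

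The only real subtlety I anticipate is the bookkeeping on dimensions: one must be careful that $m/t$ is indeed an integer (which is implicit in the statement of Lemma~\ref{af-inv1}, since affine-invariant codes here live over ${\mathbb F}_{p^{t}}$ of length $p^{m}$ with $t\mid m$), and that the expansion is taken with respect to ${\mathbb F}_{p}$ so the length becomes $t\cdot p^{m}$ rather than $m\cdot p^{m}$; this matches the convention announced in the Remark at the start of Section~\ref{sec5}. A second minor point is confirming that $d_{x}\geq d_{a}$ rather than merely $d_{x}\geq d_{2}^{\perp}$ for some smaller $d_{2}^{\perp}$ — this is automatic precisely because we chose $C_{2}$ to be the dual of $C^{e}$, so $C_{2}^{\perp}=C^{e}$ and $d_{2}^{\perp}=d_{a}$. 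Beyond these, the argument is a routine specialization of Theorem~\ref{GenExpa}, and I would write the proof in two or three sentences: set up the nested pair via Lemma~\ref{af-inv1}, read off the CSS parameters, and expand.

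\begin{proof}
By Lemma~\ref{af-inv1}, under the hypothesis $p>3$ or $m>2$ or $t\neq 1$, the extended maximal affine-invariant code $C^{e}={[p^{m}, p^{m}-1-m/t, d_{a}]}_{p^{t}}$ satisfies ${(C^{e})}^{\perp}\subset C^{e}$. Put $C_{1}=C^{e}$ and $C_{2}={(C^{e})}^{\perp}$; then $C_{2}\subset C_{1}$ are linear codes over ${\mathbb F}_{q}$ with $q=p^{t}$, with $k_{1}=p^{m}-1-\frac{m}{t}$ and $k_{2}=p^{m}-k_{1}=1+\frac{m}{t}$, so that $k_{1}-k_{2}=p^{m}-2-\frac{2m}{t}$. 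Since $C_{2}^{\perp}=C^{e}$, its minimum distance is $d_{a}$, and $d(C_{1})=d(C^{e})=d_{a}$ as well. Applying the CSS construction (Lemma~\ref{CSS1}) to the pair $C_{2}\subset C_{1}$ yields a ${[[p^{m}, p^{m}-2-\frac{2m}{t}, d_{z}/d_{x}]]}_{q}$ stabilizer code with $d_{z}\geq d(C_{1})=d_{a}$ and $d_{x}\geq d(C_{2}^{\perp})=d_{a}$. Finally, expanding this code with respect to a basis of ${\mathbb F}_{q}={\mathbb F}_{p^{t}}$ over ${\mathbb F}_{p}$ and invoking Theorem~\ref{GenExpa}, we obtain an AQECC with parameters ${[[tp^{m}, t(p^{m}-2-2\frac{m}{t}), d_{z}/d_{x}]]}_{p}$, where $d_{z}\geq d_{a}$ and $d_{x}\geq d_{a}$, as claimed.
\end{proof}
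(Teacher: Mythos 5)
Your proof is correct and follows essentially the same route as the paper: obtain the nested pair $(C^{e})^{\perp}\subset C^{e}$ from Lemma~\ref{af-inv1}, form the CSS code over ${\mathbb F}_{p^{t}}$, and expand to ${\mathbb F}_{p}$ via Theorem~\ref{GenExpa}. The only difference is that you spell out the intermediate CSS step and the dimension bookkeeping ($k_{1}-k_{2}=p^{m}-2-2m/t$, and $C_{2}^{\perp}=C^{e}$ forcing both distance bounds to equal $d_{a}$), which the paper's two-sentence proof leaves implicit.
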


\begin{proof}
Consider the dual containing extended maximal affine-invariant
code $C^{e}$ with parameters ${[p^{m}, p^{m}-1-m/t, d]}_{}$ given
in Lemma~\ref{af-inv1}, where $p > 3$ (or $m > 2$ or $t\neq 1$).
Applying Theorem~\ref{GenExpa} one obtains an ${[[tp^{m},
t(p^{m}-2 -2\frac{m}{t}), d_{z}/d_{x} ]]}_{p}$ AQECC, where
$d_{z}\geq d_{a}$, $d_{x}\geq d_{a}$, and $d_{a}$ is the minimum
distance of $C^{e}$.
\end{proof}

\subsection{Code Tables}\label{subsec5.6}

In this section we present Tables~\ref{table1} and \ref{table2}
containing families of AQECC available in the literature as well
as the new code families constructed in this paper. In the first
column we give the class and the parameters ${[[n, k, d_{z}/
d_{x}]]}_{q}$ of an AQECC; in the second column the parameter's
range, and in third column, the corresponding references.

\section{Summary}\label{sec6}

We have shown how to construct new families of asymmetric
stabilizer codes by applying the techniques of puncturing,
extending, expanding, direct sum and the $({ \bf u}| { \bf u}+{
\bf v})$ construction. As examples of application of quantum code
expansion, new AQECC derived from generalized Reed-Muller,
quadratic residue, BCH, character and affine-invariant codes have
been constructed.

\section*{Acknowledgment}
I am indebted to the anonymous referee for their valuable comments
and suggestions that improve significantly the quality of this
paper. Additionally, he/she pointed out that the construction
methods presented in the first version of this manuscript
(asymmetric quantum codes derived from linear codes) also hold for
a more general setting (asymmetric quantum codes derived from
additive codes). Based on this suggestion, I have added more
results concerning asymmetric codes derived from additive codes.
This work was partially supported by the Brazilian Agencies CAPES
and CNPq.

\small

\clearpage

\begin{table}[bth!]
\begin{center}
\caption{Families of AQECC
\label{table1}}
\begin{tabular}{|c| c| c|}
\hline Code Family / ${[[n, k, d_{z}/ d_{x}]]}_{q}$ & Range of Parameters & Ref.\\
\hline BCH & &\\
\hline
\hline ${[[n, n -m\lceil({\delta}_1 -1)(1-1/q) \rceil -m\lceil({\delta}_{2}-1)
(1-1/q)\rceil, \ d_{z}^{*}/d_{x}^{*}]]}_{q}$ & $\gcd (q, n)=1$,
${{ord}_{n}}(q) = m$, & \cite{Salah:2008}\\
\hline & $q^{\lfloor m/2 \rfloor} < n \leq  q^{m}-1$, &\\
\hline & $2 \leq {\delta}_1 , {\delta}_{2}\leq {\delta }_{max}=\min\{\lfloor n
q^{\lceil m/2 \rceil}/(q^{m}-1)\rfloor, n \}$, &\\
\hline & ${\delta}_1 < {\delta}_2^{\perp} \leq {\delta}_{2} < {\delta}_1^{\perp}$, &\\
\hline & $d_{z}^{*}=wt(C_{2}\backslash C_1^{\perp}) \geq {\delta}_{2}$, &\\
\hline & $d_{x}^{*} = wt(C_1\backslash C_2^{\perp})\geq {\delta}_1$ &\\
\hline
\hline ${[[2^{m}-1, m({\delta}_{2}-{\delta}_{1})/2, d_{x}/ d_{z}]]}_{q}$ & $m\geq 2$, $2\leq {\delta}_{1}<
{\delta}_{2} < {\delta}_{max}=2^{\lceil m/2\rceil}-1$, & \cite{Sarvepalli:2009}\\
\hline & ${\delta}_{i}\equiv 1$ mod$2$ $d_{x}\geq {\delta}_{1}$, $d_{z}\geq {\delta}_{max} + 1$ &\\
\hline
\hline ${[[n, k, d_{z}/ d_{x}]]}_{q}$ & $n=q^{m}-1$, $m\geq 3$ (if $q=3$, $m\geq 4$): & \cite{LaGuardia:2011}\\
\hline ${[[n, n - m(4q-5) - 2, d_{z}\geq (2q+2) / d_{x}\geq 2q]]}_{q}$ & & \\
\hline ${[[n, n - m(4q-c-5) - 2, d_{z}\geq (2q+2) / d_{x}\geq (2q-c)]]}_{q}$ & $0\leq c\leq q-2$ &\\
\hline ${[[n, n -m(2c-l-4)-2, d_{z}\geq c / d_{x}\geq (c-l)]]}_{q}$ & $2\leq c \leq q$ and $0\leq l\leq c-2$ &\\
\hline ${[[n, n -m(2c-l-6)-2, d_{z}\geq c / d_{x}\geq (c-l)]]}_{q}$ & $q+2 < c \leq 2q$ and $0\leq l\leq c-q-3$ &\\
\hline ${[[n, n -m(4q-l-5)-1, d_{z}\geq (2q+1) / d_{x}\geq (2q-l)]]}_{q}$ & $0\leq l\leq q-2$ & \\
\hline
\hline Expanded BCH & &\\
\hline
\hline ${[[tn, t[n -m\lceil({\delta}_1 -1)(1-1/q) \rceil -m\lceil({\delta}_{2}-1)
(1-1/q)\rceil], \ d_{z}^{*}/d_{x}^{*}]]}_{q}$ & $\gcd (q, n)=1$,
${{ord}_{n}}(q) = m$, & \cite{Salah:2008}\\
\hline & $t\geq 1$, $q^{\lfloor m/2 \rfloor} < n \leq  q^{m}-1$, &\\
\hline & $2 \leq {\delta}_1 , {\delta}_{2}\leq {\delta }_{max}=\min\{\lfloor n
q^{\lceil m/2 \rceil}/(q^{m}-1)\rfloor, n \}$, &\\
\hline & ${\delta}_1 < {\delta}_2^{\perp} \leq {\delta}_{2} < {\delta}_1^{\perp}$, &\\
\hline & $d_{z}^{*}=wt(C_{2}\backslash C_1^{\perp}) \geq {\delta}_{2}$, &\\
\hline & $d_{x}^{*} = wt(C_1\backslash C_2^{\perp})\geq {\delta}_1$ &\\
\hline
\hline ${[[tn, tk, d_{z}/ d_{x}]]}_{q}$ & $n=q^{m}-1$, $q=p^{t}$, $p$ odd prime, $t\geq 1$, &\\
\hline & $m\geq 3$ (if $q=3$, $m\geq 4$): &\\
\hline ${[[tn, t(n - m(4q-5) - 2), d_{z}\geq (2q+2) / d_{x}\geq 2q]]}_{p}$ & &\\
\hline $[[tn, t(n - m(4q-c-5) - 2), d_{z}\geq (2q+2) / d_{x}\geq$ $(2q-c)]{]}_{p}$ & $0\leq c\leq q-2$ &\\
\hline ${[[tn, t(n -m(2c-l-4)-2), d_{z}\geq c / d_{x}\geq (c-l)]]}_{p}$ & $2\leq c \leq q$, $0\leq l\leq c-2$ &\\
\hline ${[[tn, t(n -m(2c-l-6)-2), d_{z}\geq c / d_{x}\geq (c-l)]]}_{p}$ & $q+2 < c \leq 2q$, $0\leq l\leq c-q-3$\\
\hline $[[tn, t(n -m(4q-l-5)-1), d_{z}\geq (2q+1) / d_{x}\geq$ $(2q-l)]{]}_{p}$ & $0\leq l\leq q-2$ &\\
\hline
\hline BCH-LDPC & &\\
\hline
\hline ${[[p^{ms}-1, k_{x} + k_{z}-p^{ms}+1, d_{z}/ d_{x}]]}_{p}$ & $\delta \leq {\delta}_{0}= p^{{\mu}s}-1$ &
\cite{Sarvepalli:2009}\\
\hline & $k_{x}=\dim$ BCH$(\delta)\subseteq {\mathbb F}_{p}^{n}$, &\\
\hline & $k_{z}=\dim C_{EG, c}^{(1)} (m, {\mu}, 0, s, p)$, &\\
\hline & $d_{x}\geq \delta$, $d_{z}\geq  A_{EG} (m, {\mu}, {\mu}-1, s, p)$ &\\
\hline ${[[2^{2s}-1, 2^{2s}-3^{s}-s(\delta - 1), \delta / 2^{s} + 1]]}_{2}$ & $\delta=2t + 1 \leq 2^{s}-1$ & \cite{Sarvepalli:2009}\\
\hline
\hline ${[[n, k_{x} + k_{z}-n, d_{z}/ d_{x}]]}_{p}$ & $n=(p^{(m+1)s}-1)/(p^{s} -1)$ &\cite{Sarvepalli:2009}\\
\hline & $\delta \leq {\delta}_{0}=(p^{(\mu+1)s}-1)/(p^{s} -1)$, $k_{x}=\dim$ BCH$_{p}(\delta, n)$, &\\
\hline & $k_{z}=\dim C_{PG}^{(1)} (m, {\mu}, 0, s, p)$, $d_{x}\geq \delta$, &\\
\hline & $d_{z}\geq A_{EG} (m, {\mu}, {\mu}-1, s, p)$ &\\
\hline
\hline ${[[n, n-3^{s}-3s\lceil (\delta -1)/2\rceil -1, \delta/ (2^{s}+2)]]}_{2}$ &
$n=2^{2s}+2^{s}+1$, $\delta \leq 2^{s/2} + 1$ &\cite{Sarvepalli:2009}\\
\hline
\hline LDPC-LDPC & &\\
\hline ${[[p^{ms}, k_{x} + k_{z}-p^{ms}, d_{z}/ d_{x}]]}_{p}$ & $p$ prime,
$q=p^{s}$, $s\geq 1$, $m\geq 2$, & \cite{Sarvepalli:2009}\\
\hline & $1<{\mu}_{z} < m$, $ m - {\mu}_{z} +1\leq {\mu}_{x} < m$, &\\
\hline & $k_{x}=\dim C_{EG}^{(1)} (m, {\mu}_{x}, 0, s, p)$, &\\
\hline & $k_{z}=\dim C_{EG}^{(1)} (m, {\mu}_{z}, 0, s, p)$, &\\
\hline & $d_{x}\geq  A_{EG} (m, {\mu}_{x}, {\mu}_{x}-1, s, p)+1$, &\\
\hline & $d_{z}\geq  A_{EG} (m, {\mu}_{z}, {\mu}_{z}-1, s, p)+1$ &\\
\hline
\hline concatenated RS & &\\
\hline ${[[2mq, mk - 1, (\geq 2(q-k+1))/2]]}_{4}$ & $n=4^{m}$, $1\leq k\leq q$ & \cite{Ezerman:2010B}\\
\hline
\hline GRS & &\\
\hline ${[[mn, m(2k-n+c), d_{z}\geq d / d_{x}\geq (d-c)]]}_{q}$ & $1< k < n < 2k+c \leq q^{m}$, & \cite{LaGuardia:2012}\\
\hline & $k=n-d+1$, $d > c+1$, $c, m \geq 1$ &\\
\hline
\end{tabular}
\end{center}
\end{table}

\clearpage

\clearpage

\begin{table}[bth!]
\begin{center}
\caption{Families of AQECC
\label{table2}}
\begin{tabular}{|c| c| c|}
\hline Code Family / ${[[n, k, d_{z}/ d_{x}]]}_{q}$ & Range of Parameters & Ref.\\
\hline RM & &\\
\hline ${[[2^{m}, k, 2^{m-r_{2}}\geq 2^{r_{1}+1}]]}_{2}$ & $0\leq r_{1} < r_{2} < m$,
$k=\displaystyle\sum_{j=r_{1}+1}^{r_{2}}
\left(
\begin{array}{c}
m\\
j\\
\end{array}
\right)$ & \cite{Sarvepalli:2009}\\
\hline
\hline Expanded GRM & &\\
\hline ${[[lq^{m}, \ l[k({\alpha}_{2})-k({\alpha}_{1})], \ d_{z}/d_{x}]]}_{p}$ &
$0\leq {\alpha}_1 \leq {\alpha}_{2}< m(q-1)$, $q=p^{l}$, $p$ prime, $l\geq 1$, &\\
\hline & $k(\alpha)=\displaystyle\sum_{i=0}^{m}
{(-1)}^{i}\left(
\begin{array}{c}
m\\
i\\
\end{array}
\right)
\left(
\begin{array}{c}
m+\alpha -iq\\
\alpha -iq\\
\end{array}
\right)$,  &\\
\hline & $d_{z}\geq d({\alpha}_{2})$, $d_{x}\geq d({\alpha}_{1}^{\perp})$, $d({\alpha}_{2})=(t+1)q^{u}$, &\\
\hline & $m(q-1)-{\alpha}_{2} =(q-1)u+t$, $0\leq t < q-1$, &\\
\hline & $d({\alpha}_{1}^{\perp})=(a+1)q^{b}$, ${\alpha}_{1}+1 =(q-1)b+a$, $0\leq a\leq q-1$ &\\
\hline
\hline MDS & &\\
\hline ${[[n, n - d_1 -d_{2} + 2, d_{z}/d_{x}]]}_{q}$ & $n=q-1$, $d_{x}=d_1 < d_{z}=d_{2}$ & \cite{Salah:2008}\\
\hline ${[[n, n - 2, 2/2]]}_{q}$ & $q$ prime power, $n\geq 3$ & \cite{Wang:2010}\\
\hline ${[[n, k-1, (n-k+1)/2]]}_{q}$ & $q\geq n> 3$, $1< k \leq n-2$ & \cite{Wang:2010}\\
\hline ${[[2^{m} + 2, 2, 2^{m}/2]]}_{2^{m}}$ & $m > 0$ integer  & \cite{Wang:2010}\\
\hline ${[[2^{m} + 2, 2^{m}-2, 4/2]]}_{2^{m}}$ & $m > 0$, $m\neq 2$ integer  & \cite{Wang:2010}\\
\hline ${[[n, j, d_{z}/d_{x}]]}_{q}$ & $n, k, j \in \mathbb{Z}$, $q\geq 5$, $n\leq q$,
$2 \leq k \leq n-3$, & \cite{Wang:2010}\\
\hline & $ j\leq n-k -2$, $\{ d_{z}, d_{x}\}=\{n-k-j+1, k+1\}$ &\\
\hline
\hline ${[[q+1, 2j, d_{z}/d_{x}]]}_{q}$ & $n, k, j \in \mathbb{Z}$, $q\geq 5$, $k\geq 2$,
$k+2j \leq q-1$, & \cite{Wang:2010}\\
\hline & $\{ d_{z}, d_{x}\}=\{q-k-2j+2, k+1\}$ &\\
\hline ${[[q+1, q-1-2s, (2s+1)/3]]}_{q}$ & $q=2^{m}\geq 4$, $s\leq q/2-1$ & \cite{Wang:2010}\\
\hline ${[[2^{m}+2, 2^{m}-4, 4/4]]}_{2^{m}}$ & $2^{m}\geq 4$ & \cite{Wang:2010}\\
\hline
\hline ${[[n, 2k-n+c, d_{z}\geq d / d_{x}\geq (d-c)]]}_{q}$ & $1< k < n < 2k+c \leq q$, & \cite{LaGuardia:2012}\\
\hline & $k=n-d+1$, $d > c+1$, $c\geq 1$ &\\
\hline
\hline Expanded Character & &\\
\hline ${[[t2^{m}, t[k(r_{2}) - k(r_1)], d_{z}/d_{x}]]}_{p}$ & $q=p^{t}$, $p$ odd prime, $t\geq 1$, & \\
\hline & $k(r)=\displaystyle\sum_{i=0}^{r}\left(\begin{array}{c}
m\\
i\\
\end{array}
\right)$, $d_{z}\geq 2^{m-r_{2}}$, $d_{x}\geq  2^{r_1 + 1}$ &\\
\hline
\hline QR & &\\
\hline ${[[p, 1, d_{z}/d_{x}]]}_{q}$ & $p$ prime, $p\equiv 1 \mod 4$, & \cite{Ketkar:2006}\\
\hline & $q=p_{1}^{t}$, $p\nmid p_{1}$, $q$ is a quadratic residue mod $p$, &\\
\hline & $d_{z}\geq \sqrt{p}$, $d_{x}\geq \sqrt{p}$ &\\
\hline
\hline ${[[p, 1, d_{z}/d_{x}]]}_{q}$ & $p$ prime, $p\equiv 3 \mod 4$, & \cite{Ketkar:2006}\\
\hline & $q=p_{1}^{t}$, $p\nmid p_{1}$, $q$ is a quadratic residue mod $p$, &\\
\hline & $d_{z}\geq d$, $d_{x}\geq d$, $d^{2}-d+1\geq p$ &\\
\hline
\hline Expanded QR & &\\
\hline ${[[tp, t, d_{z}/d_{x}]]}_{p_{*}}$ & $p$ prime, $p\equiv 1 \mod 4$, $q=p_{*}^{t}$, &\\
\hline & $t\geq 1$, $p\nmid p_{*}$, $q$ is a quadratic residue mod $p$, &\\
\hline & $d_{z}\geq \sqrt{p}$, $d_{x}\geq \sqrt{p}$ &\\
\hline
\hline ${[[tp, t, d_{z}/d_{x}]]}_{p_{*}}$ & $p$ prime, $p\equiv 3 \mod 4$, $q=p_{*}^{t}$, & \\
\hline & $t\geq 1$, $p\nmid p_{*}$, $q$ is a quadratic residue mod $p$, &\\
\hline & $d_{z}\geq d$, $d_{x}\geq d$, $d^{2}-d+1\geq p$ &\\
\hline
\hline Affine-Invariant & &\\
\hline ${[[tp^{m}, t(p^{m}-2 -2\frac{m}{t}), d_{z}/d_{x} ]]}_{p}$ & $q=p^{t}$, & \\
\hline & $p > 3$, $m > 2$, $d_{z}\geq d_{a}$, $d_{x}\geq d_{a}$, &\\
\hline & $d_{a}$ is given in Theorem~\ref{af-inv-lag} &\\
\hline
\hline Product code & &\\
\hline ${[[{(q - 1)}^{2}, \ (q - d_{1})(q - d_{3}) - (q - d_{2})(q - d_{4}),
\ d_{z}/d_{x}]]}_{q}$ & $2 \leq d_{1} \leq d_{2} < q-1$, $2 \leq d_3 \leq d_{4} < q-1$, &\cite{LaGuardia:2012I}\\
\hline & $d_{z}\geq \max \{ d_{1}d_{3}, \min\{ q-d_{2}, q-d_{4} \} \}$, &\\
\hline & $d_{x}\geq \min \{ d_{1}d_{3}, \min\{ q-d_{2}, q-d_{4} \} \}$ & \\
\hline
\end{tabular}
\end{center}
\end{table}

\clearpage

\end{document}